\documentclass[12pt]{article}

\usepackage[T1]{fontenc}
\usepackage{lmodern}

\usepackage{setspace}
\usepackage[margin=1.25in]{geometry}

\usepackage[dvipsnames]{xcolor}
\usepackage{pdfpages}
\usepackage{float}
\usepackage{multirow}
\usepackage{caption}
\usepackage{subcaption}
\usepackage{epigraph}

\usepackage{mathtools}
\usepackage{amssymb}
\usepackage{amsthm}
\usepackage{amsfonts}
\usepackage{aliascnt}
\usepackage{accents}
\usepackage{dutchcal}
\usepackage{bbm}

\usepackage{enumitem}
\usepackage{sgame}
\usepackage{tikz}
\usepackage{tikz-cd}
\usetikzlibrary{calc,shapes,arrows}
\usepackage{tcolorbox}
\usepackage{listings}
\usepackage[normalem]{ulem}

\usepackage[round]{natbib}
\newcommand{\E}{\mathbb{E}}

\newtheorem{theorem}{Theorem}[section]

\newaliascnt{proposition}{theorem}
\newtheorem{proposition}[proposition]{Proposition}
\aliascntresetthe{proposition}

\newaliascnt{lemma}{theorem}
\newtheorem{lemma}[lemma]{Lemma}
\aliascntresetthe{lemma}

\newaliascnt{corollary}{theorem}
\newtheorem{corollary}[corollary]{Corollary}
\aliascntresetthe{corollary}

\newaliascnt{claim}{theorem}

\aliascntresetthe{claim}

\theoremstyle{definition}

\newaliascnt{definition}{theorem}
\newtheorem{definition}[definition]{Definition}
\aliascntresetthe{definition}

\newaliascnt{property}{theorem}

\aliascntresetthe{property}

\newaliascnt{example}{theorem}

\aliascntresetthe{example}

\newaliascnt{assumption}{theorem}

\aliascntresetthe{assumption}

\newaliascnt{condition}{theorem}

\aliascntresetthe{condition}

\newaliascnt{question}{theorem}

\aliascntresetthe{question}

\newaliascnt{remark}{theorem}

\aliascntresetthe{remark}

\newaliascnt{remarks}{theorem}

\aliascntresetthe{remarks}

\newaliascnt{aside}{theorem}

\aliascntresetthe{aside}

\newaliascnt{note}{theorem}

\aliascntresetthe{note}

\usepackage{thmtools}
\usepackage{thm-restate}

\usepackage{hyperref}

\hypersetup{
    colorlinks=true,
    linkcolor=OrangeRed,
    filecolor=Thistle,
    urlcolor=Thistle,
    citecolor=Thistle,
}

\usepackage[nameinlink]{cleveref}

\crefname{theorem}{theorem}{theorems}
\Crefname{theorem}{Theorem}{Theorems}
\crefname{proposition}{proposition}{propositions}
\Crefname{proposition}{Proposition}{Propositions}
\crefname{lemma}{lemma}{lemmas}
\Crefname{lemma}{Lemma}{Lemmas}
\crefname{corollary}{corollary}{corollaries}
\Crefname{corollary}{Corollary}{Corollaries}
\crefname{claim}{claim}{claims}
\Crefname{claim}{Claim}{Claims}
\crefname{definition}{definition}{definitions}
\Crefname{definition}{Definition}{Definitions}
\crefname{property}{property}{properties}
\Crefname{property}{Property}{Properties}
\crefname{example}{example}{examples}
\Crefname{example}{Example}{Examples}
\crefname{assumption}{assumption}{assumptions}
\Crefname{assumption}{Assumption}{Assumptions}

\makeatletter
\let\cref@old@isrefconsecutive\cref@isrefconsecutive
\def\cref@isrefconsecutive#1#2{%
  \begingroup
    \def\cref@assumptiontype{assumption}%
    \cref@gettype{#1}{\cref@typea}%
    \ifx\cref@typea\cref@assumptiontype
      \endgroup
      \@cref@refconsecutivefalse
    \else
      \endgroup
      \cref@old@isrefconsecutive{#1}{#2}%
    \fi
}
\makeatother

\crefname{condition}{condition}{conditions}
\Crefname{condition}{Condition}{Conditions}
\crefname{question}{question}{questions}
\Crefname{question}{Question}{Questions}
\crefname{remark}{remark}{remarks}
\Crefname{remark}{Remark}{Remarks}
\crefname{remarks}{remarks}{remarks}
\Crefname{remarks}{Remarks}{Remarks}
\crefname{aside}{aside}{asides}
\Crefname{aside}{Aside}{Asides}
\crefname{note}{note}{notes}
\Crefname{note}{Note}{Notes}
\crefname{appendix}{appendix}{appendices}
\Crefname{appendix}{Appendix}{Appendices}

\newcommand{\secref}[1]{\hyperref[#1]{\S\ref*{#1}}}

\definecolor{backcolour}{rgb}{0.63,0.79,0.95}
\lstdefinestyle{mystyle}{
  backgroundcolor=\color{backcolour},
  basicstyle=\ttfamily\footnotesize,
  breakatwhitespace=false,
  breaklines=true,
  captionpos=b,
  keepspaces=true,
  numbers=left,
  numbersep=5pt,
  showspaces=false,
  showstringspaces=false,
  showtabs=false,
  tabsize=2
}
\begin{document} 
\title{Wherefore BIC = DIC?}
\author{Mark Whitmeyer\thanks{Arizona State University. Email: \href{mailto:mark.whitmeyer@gmail.com}{mark.whitmeyer@gmail.com}. Dedicated to KS. I used ChatGPT as one would an RA (checking proofs, identifying references, scanning for typos, etc.) and solicited feedback from \href{refine.ink}{refine.ink}.}}
\date{\today}
\maketitle

\begin{abstract}
I show that the equivalence in finite social-choice environments of Bayesian incentive compatibility and dominant-strategy incentive compatibility holds universally if and only if every type-dependent difference in the valuation of prizes operates through a single \textit{scalar} type index. Beyond binary domains this universal robustifiability is nongeneric.
\end{abstract}

\section{Introduction}

Bayesian incentive compatibility (BIC) requires truthful reporting to be optimal after averaging over the other agents' types. Dominant-strategy incentive compatibility (DIC) requires truthful reporting to be optimal for every realization of the other agents' reports. I ask when this distinction can always be eliminated without changing the outcomes that matter for welfare and incentives. More precisely, when does every Bayesian incentive-compatible mechanism have a dominant-strategy incentive-compatible counterpart that delivers every type the same interim expected utility and generates the same \textit{ex ante} expected gross social surplus?

I hold fixed a \textit{utility domain}: a finite set of types \(T\), a finite set of prizes \(X\), and a utility function \(u\colon T\times X\to\mathbf R\), but otherwise allow the social-choice environment to vary freely. Namely, the number of agents, the alternatives, the lotteries assigned by each alternative, and the independent full-support prior may all vary. I call the utility domain \textit{universally robustifiable} when the desired BIC-to-DIC replacement exists in every such environment.

I find that the central feature is the ``dimension of the interaction'' between types and prizes. To isolate this interplay, I compare how different types value the same changes in prizes. After subtracting utility terms that depend only on the type or only on the prize, these comparisons form a matrix. I call its rank the \textit{incentive rank} of \(u\). My main result reveals that the domain is universally robustifiable if and only if its incentive rank is at most one. Equivalently, universal robustifiability holds exactly when
\[
u(t,x)=M(t)f(x)+m(t)+g(x)
\]
for some functions \(M,m\colon T\to\mathbf R\) and \(f,g\colon X\to\mathbf R\). That is, every type-dependent difference in the valuation of prizes must operate through a single scalar type index.

When the incentive rank is at most one, I reduce any BIC mechanism to a mechanism with scalar types and gross utilities affine in type, apply the BIC-to-DIC equivalence for that environment \citep[Theorem 2]{gershkov2013equivalence}, and lift the resulting mechanism back to the original type space. In the other direction, when the incentive rank is at least two, I use three types and three prizes to construct a BIC mechanism that no payoff-equivalent DIC mechanism can match. I then reproduce this counterexample in the original utility domain using small (and balanced) perturbations of a common lottery and finally extend it to the full type space under a full-support prior.

\section{Environment and universal robustifiability}

I begin by separating the utility domain, which remains fixed throughout the paper, from the social-choice environments built upon that domain. The utility domain specifies how each possible type evaluates lotteries over prizes. A social-choice environment then specifies the agents, the available alternatives, and the lottery over prizes that each alternative gives to each agent. My universal robustifiability concept (\textit{infra}) requires the BIC-to-DIC replacement to work in \textit{every} such environment.

Let \(T\) be a finite set of types, let \(X\) be a finite set of prizes, and let \(u\colon T\times X\to\mathbf R\). For every finite set \(S\), write \(\Delta(S)\) for the probability distributions on \(S\). Extend \(u(t,\cdot)\) linearly to lotteries and finite signed measures on \(X\).

A finite social-choice environment consists of a finite set of agents \(I\), a finite set of alternatives \(\mathcal A\), and, for every \(i\in I\) and \(a\in\mathcal A\), a lottery \(\lambda_i(a)\in\Delta(X)\).\footnote{To wit, alternative \(a\) provides agent \(i\) with the lottery \(\lambda_i(a)\) over prizes, and the same social alternative may give different lotteries to different agents.} Agent \(i\)'s type \(t_i\in T\) is independently distributed according to a full-support prior \(\pi_i\). Write \(\pi\coloneqq\prod_{i\in I}\pi_i\) and \(\pi_{-i}\coloneqq\prod_{j\neq i}\pi_j\).

A direct mechanism asks each agent to announce an element of \(T\). Write \(\hat t_i\) for agent \(i\)'s report and \(\hat t=(\hat t_j)_{j\in I}\) for the report profile. The (direct) mechanism \((q,p)\) consists of an allocation rule \(q\colon T^I\to\Delta(\mathcal A)\) and payments \(p_i\colon T^I\to\mathbf R\) made by the agents. Write \(q^a\left(\hat{t}\right)\) for the probability of alternative \(a\) after the report profile \(\hat t\).

The interim expected utility of type \(t_i\) from reporting \(\hat t_i\), when the other agents report truthfully, is
\[
U_i(t_i,\hat t_i;q,p)
\coloneqq
\sum_{t_{-i}\in T^{I\setminus\{i\}}}
\pi_{-i}(t_{-i})
\left(
\sum_{a\in\mathcal A}
q^a(\hat t_i,t_{-i})u(t_i,\lambda_i(a))
-p_i(\hat t_i,t_{-i})
\right).
\]
Write \(U_i^{q,p}(t_i)\coloneqq U_i(t_i,t_i;q,p)\) for interim utility when the truth is told.

The mechanism is Bayesian incentive compatible (BIC) if \(U_i^{q,p}(t_i)\geq U_i(t_i,\hat t_i;q,p)\) for every \(i\in I\) and \(t_i,\hat t_i\in T\). It is dominant-strategy incentive compatible (DIC) if
\[\sum_{a\in\mathcal A}q^a(t_i,t_{-i})u(t_i,\lambda_i(a))
-p_i(t_i,t_{-i}) \geq
\sum_{a\in\mathcal A}q^a(\hat t_i,t_{-i})u(t_i,\lambda_i(a))
-p_i(\hat t_i,t_{-i}),
\]
for every \(i\in I\), \(t_i,\hat t_i\in T\), and \(t_{-i}\in T^{I\setminus\{i\}}\).

The mechanism's \textit{ex ante} expected gross social surplus is
\[
W^\pi(q)
\coloneqq
\sum_{t\in T^I}\pi(t)
\sum_{a\in\mathcal A}q^a(t)
\sum_{i\in I}u(t_i,\lambda_i(a)).
\]
Two mechanisms are payoff-equivalent if they give every type of every agent the same interim expected utility and generate the same \textit{ex ante} expected gross social surplus.

\begin{definition}\label{def:universal-robustifiability}
The domain \((T,X,u)\) is \textit{universally robustifiable} if, in every finite social-choice environment and under every independent full-support prior, every BIC mechanism has a payoff-equivalent DIC mechanism.
\end{definition}

\section{Incentive rank}

The agents' incentives depend on how the attractiveness of different prizes changes with their types. Naturally, a utility term that depends only on the type is the same under every report and is, thus, irrelevant for incentives. In contrast, a utility term that depends only on the prize may affect which alternatives are attractive, but it affects every type in the same way and does not contribute to variation in preferences across types. 

To compare how different types value prizes, I consider their utility gains from the same change in prize. To ease this comparison, I choose one type \(t_0\in T\) and one prize \(x_0\in X\) as baselines for comparison; I call them the baseline type and baseline prize. For \(t\in T\) and \(x\in X\), define
\[
\Delta_u^{t_0,x_0}(t,x)
\coloneqq
u(t,x)-u(t,x_0)-u(t_0,x)+u(t_0,x_0).
\]
Accordingly, \(\Delta_u^{t_0,x_0}(t,x)\) is type \(t\)'s gain in utility from replacing \(x_0\) by \(x\), minus the baseline type's utility gain from the same replacement.

For fixed baselines \(t_\circ\in T\) and \(x_\circ\in X\), I call \(D^{t_\circ,x_\circ} \coloneqq (
\Delta_u^{t_\circ,x_\circ}(t,x)
)_{t\in T,x\in X}\) the \textit{utility matrix associated with \((t_\circ, x_\circ)\)} (or just the \textit{utility matrix} when unambiguous) and define the \textit{incentive rank} of \(u\) by \(r(u)
\coloneqq
\operatorname{rank} D^{t_\circ,x_\circ}\).\footnote{Here, \(\operatorname{rank}\) denotes the usual matrix rank; \textit{viz.}, the number of linearly independent rows or columns of the utility matrix.} Economically, the incentive rank is the smallest number of type-specific indices needed to describe how utility differences across prizes vary with type. Rank zero means that type affects utility levels but not utility differences across prizes. Rank one means that every prize has a common index \(f(x)\), and types differ in their tradeoffs among prizes only through the coefficient \(M(t)\) attached to that index. Namely, once \(M(t)\) is known, the type label contains no further information relevant for reporting incentives. Rank at least two means that no single prize index and type-specific coefficient can capture all type-dependent tradeoffs among prizes.

For \(t_0,t_1,t_2\in T\) and \(x_0,x_1,x_2\in X\), define
\[
\Omega_u(t_0,t_1,t_2;x_0,x_1,x_2)
\coloneqq
\det
\begin{pmatrix}
\Delta_u^{t_0,x_0}(t_1,x_1)&\Delta_u^{t_0,x_0}(t_1,x_2)\\
\Delta_u^{t_0,x_0}(t_2,x_1)&\Delta_u^{t_0,x_0}(t_2,x_2)
\end{pmatrix}.
\]
I call \((t_0,t_1,t_2;x_0,x_1,x_2)\) a utility triangle and \(\Omega_u(t_0,t_1,t_2;x_0,x_1,x_2)\) its determinant.

\begin{lemma}\label{lem:incentive-rank-characterization}
The incentive rank does not depend on the baselines \(t_\circ\) and \(x_\circ\). Moreover, the following are equivalent:
\begin{enumerate}[nosep,noitemsep]
\item \(r(u)\leq1\).
\item There are functions \(M,m\colon T\to\mathbf R\) and \(f,g\colon X\to\mathbf R\) such that
\[
u(t,x)=M(t)f(x)+m(t)+g(x), \quad \forall \ t \in T, \ \forall \ x\in X.
\tag{1}\label{eq:rank-one-utility-factorization}
\]
\item \label{it:ir3} \(\Omega_u(t_0,t_1,t_2;x_0,x_1,x_2)=0\) for every \(t_0,t_1,t_2\in T\) and \(x_0,x_1,x_2\in X\).
\end{enumerate}
\end{lemma}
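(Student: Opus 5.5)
The plan is to view everything through the \emph{double-centering} map. Regard \(u\) as a \(T\times X\) matrix \(U\). Then \(D^{t_\circ,x_\circ}=P U Q\), where \(P\) subtracts row \(t_\circ\) from every row and \(Q\) subtracts column \(x_\circ\) from every column. The key observation is that a change of baselines only rearranges differences.

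\textbf{Baseline independence.} I will show directly that
\[
\Delta_u^{t_1,x_1}(t,x)=\Delta_u^{t_0,x_0}(t,x)-\Delta_u^{t_0,x_0}(t_1,x)-\Delta_u^{t_0,x_0}(t,x_1)+\Delta_u^{t_0,x_0}(t_1,x_1),
\]
which is a routine expansion in which all the \(t_0\) and \(x_0\) terms cancel. Hence \(D^{t_1,x_1}\) is obtained from \(D^{t_0,x_0}\) by subtracting a fixed row from every row and a fixed column from every column, so \(\operatorname{rank}D^{t_1,x_1}\le\operatorname{rank}D^{t_0,x_0}\). By symmetry the two ranks are equal. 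This settles the first claim, and from here on I may fix any convenient baselines.

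\textbf{(1)\(\Rightarrow\)(2).} If \(D\coloneqq D^{t_\circ,x_\circ}\) has rank at most one, write \(D(t,x)=M(t)f(x)\); rank zero is the case \(M\equiv0\). Unwinding the definition of \(\Delta\) gives
\[
u(t,x)=M(t)f(x)+\bigl[u(t,x_\circ)-u(t_\circ,x_\circ)\bigr]+u(t_\circ,x),
\]
which is \eqref{eq:rank-one-utility-factorization} with \(m(t)=u(t,x_\circ)-u(t_\circ,x_\circ)\) and \(g(x)=u(t_\circ,x)\).

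\textbf{(2)\(\Rightarrow\)(3).} Substituting \eqref{eq:rank-one-utility-factorization} into \(\Delta\), the \(m\) and \(g\) terms cancel, so
\[
\Delta_u^{t_0,x_0}(t,x)=\bigl(M(t)-M(t_0)\bigr)\bigl(f(x)-f(x_0)\bigr).
\]
Thus every \(2\times2\) matrix appearing in \(\Omega_u\) is an outer product of two vectors and has zero determinant.

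\textbf{(3)\(\Rightarrow\)(1).} Fix baselines \(t_0,x_0\). Condition (3) says every \(2\times2\) minor of \(D^{t_0,x_0}\) vanishes, since each such minor is exactly \(\Omega_u(t_0,t_1,t_2;x_0,x_1,x_2)\) for the chosen rows and columns. Minors using distinct rows and columns are \(\Omega\) with distinct indices. Minors with a repeated row or column are trivially zero, as are those involving the zero row \(t_0\) or the zero column \(x_0\); these are also covered by (3) anyway. A matrix all of whose \(2\times2\) minors vanish has rank at most one, and by the first part this rank is \(r(u)\).

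\textbf{Main obstacle.} The only step needing care is the baseline-change identity. Everything else is standard linear algebra, so the bookkeeping there is where I will be most careful.
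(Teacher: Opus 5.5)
Your proof is correct and follows essentially the same route as the paper. It uses the same baseline-change identity with the row/column-operation rank argument, the same unwinding of \(\Delta\) to get the factorization, the same outer-product formula \((M(t)-M(t_0))(f(x)-f(x_0))\), and the same identification of the \(2\times2\) minors of \(D^{t_0,x_0}\) with the determinants \(\Omega_u\). The differences are only bookkeeping: you close a single cycle (1)\(\Rightarrow\)(2)\(\Rightarrow\)(3)\(\Rightarrow\)(1), obtaining (3) directly from (2) via the outer-product formula, whereas the paper proves (1)\(\Leftrightarrow\)(2) and (1)\(\Leftrightarrow\)(3) separately and constructs \(M,f\) explicitly from a nonzero entry instead of citing the outer-product form of a rank-one matrix.
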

\Cref{lem:incentive-rank-characterization} reveals that one scalar \(M(t)\) suffices exactly when every utility triangle has zero determinant. In that case, after removing a type-only term and a type-independent prize term, every type-dependent difference in the valuation of prizes can be written as \(M(t)f(x)\).

\begin{proof}[Proof of \Cref{lem:incentive-rank-characterization}]
Fix two pairs of baselines \((t_0,x_0)\) and \((\hat t_0,\hat x_0)\). Directly,
\[\Delta_u^{\hat t_0,\hat x_0}(t,x) =\Delta_u^{t_0,x_0}(t,x) -\Delta_u^{t_0,x_0}(\hat t_0,x) -\Delta_u^{t_0,x_0}(t,\hat x_0) +\Delta_u^{t_0,x_0}(\hat t_0,\hat x_0).
\]
i.e., I obtain \(D^{\hat t_0,\hat x_0}\) from \(D^{t_0,x_0}\) by subtracting the row indexed by \(\hat t_0\) from every row and then subtracting the resulting column indexed by \(\hat x_0\) from every column. These operations cannot increase rank. Interchanging the two pairs of baselines produces the reverse rank inequality. Hence, the two utility matrices have the same rank.

Suppose that \(r(u)\leq1\), and fix baselines \(t_0\in T\) and \(x_0\in X\). Write \(D(t,x)\coloneqq\Delta_u^{t_0,x_0}(t,x)\). If \(D\) is identically zero, let \(M\) and \(f\) be the zero functions. Otherwise, choose \(\hat t\in T\) and \(\hat x\in X\) such that \(D(\hat t,\hat x)\neq0\). Because \(D\) has rank one, every \(2\times2\) minor vanishes, and so
\(D(t,x)D(\hat t,\hat x)
=
D(t,\hat x)D(\hat t,x)\) for every \(t\in T\) and \(x\in X\). 

Define \(M(t)\coloneqq D(t,\hat x)\) and \(f(x)\coloneqq\frac{D(\hat t,x)}{D(\hat t,\hat x)}\) so that \(D(t,x)=M(t)f(x)\). Setting \(m(t)\coloneqq u(t,x_0)\) and \(g(x)\coloneqq u(t_0,x)-u(t_0,x_0)\) produces 
\[M(t)f(x)+m(t)+g(x) =D(t,x)+u(t,x_0)+u(t_0,x)-u(t_0,x_0) =u(t,x),
\]
and so \(r(u)\leq1\) implies \eqref{eq:rank-one-utility-factorization}.

Conversely, suppose that \eqref{eq:rank-one-utility-factorization} holds. For arbitrary baselines \(t_0\in T\) and \(x_0\in X\),
\[
\Delta_u^{t_0,x_0}(t,x)
=
(M(t)-M(t_0))(f(x)-f(x_0)).
\]
The utility matrix is an outer product and has rank at most one.\footnote{To elaborate, it is the outer product of the column vector \((M(t)-M(t_0))_{t\in T}\) and the row vector \((f(x)-f(x_0))_{x\in X}\); and so every row is a scalar multiple of \((f(x)-f(x_0))_{x\in X}\).} Hence, \eqref{eq:rank-one-utility-factorization} implies \(r(u)\leq1\).

If \(r(u)\leq1\), every \(2\times2\) minor of every utility matrix vanishes, so every utility triangle has zero determinant.

Finally, suppose that every utility triangle has zero determinant. Fix the baselines \(t_\circ\) and \(x_\circ\) used to define \(r(u)\). Every \(2\times2\) minor of \(D^{t_\circ,x_\circ}\) is obtained by choosing two rows \(t_1,t_2\in T\) and two columns \(x_1,x_2\in X\). Its determinant is exactly \(\Omega_u(t_\circ,t_1,t_2;x_\circ,x_1,x_2)\), which is zero by hypothesis. Hence, every \(2\times2\) minor vanishes, so \(r(u)\leq1\).\footnote{A matrix has rank at most one exactly when all its \(2\times2\) minors vanish.}
\end{proof}


\begin{theorem}\label{thm:maximal-domain}
The following are equivalent:
\begin{enumerate}[noitemsep,nosep]
\item The domain \((T,X,u)\) is universally robustifiable.
\item \(r(u)\leq1\).
\item The factorization in \eqref{eq:rank-one-utility-factorization} holds.
\item Every utility triangle has zero determinant.
\end{enumerate}
If \(r(u)\geq2\), there is a two-agent environment with three lottery-valued alternatives and an independent full-support prior on \(T^2\) in which a BIC mechanism with a deterministic allocation rule has no payoff-equivalent DIC mechanism, even when DIC allocation rules are allowed to randomize. Moreover, there is a lottery \(\lambda^\circ\in\Delta(X)\) such that, for each alternative \(k\), the lotteries \(\lambda_k^+,\lambda_k^-\in\Delta(X)\) assigned to the two agents satisfy \(\lambda_k^++\lambda_k^-=2\lambda^\circ\) as finite measures on \(X\).
\end{theorem}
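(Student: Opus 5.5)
By \Cref{lem:incentive-rank-characterization}, statements 2, 3 and 4 are already equivalent. It therefore suffices to prove (3)\(\Rightarrow\)(1) and to construct the explicit counterexample when \(r(u)\geq2\); the counterexample gives (1)\(\Rightarrow\)(2) by contraposition.

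\emph{Sufficiency.} Assume \(u(t,x)=M(t)f(x)+m(t)+g(x)\) and fix an environment and a BIC mechanism \((q,p)\).
\begin{enumerate}[nosep,noitemsep]
\item For an alternative \(a\), agent \(i\)'s gross utility is \(M(t_i)f(\lambda_i(a))+m(t_i)+g(\lambda_i(a))\). Up to the type-only term \(m(t_i)\), this is affine in the scalar \(\theta_i\coloneqq M(t_i)\), with alternative-specific slope \(f(\lambda_i(a))\) and intercept \(g(\lambda_i(a))\).
\item BIC implies that types with the same \(M\)-value obtain the same interim expected allocation slope and the same net interim utility net of \(m\); each such type can mimic the other and their valuations coincide up to \(m\).
\item I will therefore first pool each level set \(\{t:M(t)=\theta\}\). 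Replace the mechanism on the level set by its \(\pi_i\)-average over that set. This preserves BIC, interim utilities and surplus, because \(m\) is report-independent.
\item The result is a mechanism on the finite scalar type space \(M(T)\) with the induced prior and utilities affine in \(\theta\). This is exactly the setting of \citet[Theorem 2]{gershkov2013equivalence}, which delivers a payoff-equivalent DIC mechanism there, with the same interim utilities and the same expected surplus.
\item Lift this mechanism back by letting \(t\) report \(M(t)\). DIC is inherited because \(t\)'s incentives depend only on \(M(t)\). Interim utilities and surplus match, since \(m\) and \(g\) enter additively and identically.
\end{enumerate}
The one point to check is that Gershkov et al.'s environment covers alternatives giving arbitrary affine-in-type values with alternative- and agent-specific intercepts. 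I would verify this by writing their valuation as \(\theta_i s_i(a)+c_i(a)\).

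\emph{Necessity.} If \(r(u)\geq2\), the lemma gives a utility triangle with \(\Omega_u\neq0\). I then proceed in three steps.
\begin{enumerate}[nosep,noitemsep]
\item \textbf{Local reduction.} Work with the three types \(t_0,t_1,t_2\) and lottery perturbations \(\lambda^\circ\pm\varepsilon\mu\), where \(\mu\) is a signed measure of zero total mass supported on \(\{x_0,x_1,x_2\}\) and \(\lambda^\circ\) is, say, uniform on these three prizes. Relative to type \(t_0\), types value the direction \(\mu=\alpha(\delta_{x_1}-\delta_{x_0})+\beta(\delta_{x_2}-\delta_{x_0})\) through the vector \((\Delta(t_j,x_1),\Delta(t_j,x_2))\cdot(\alpha,\beta)\). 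Since \(\Omega_u\neq0\), the map \(t_j\mapsto\) this vector is injective into \(\mathbf R^2\) with \(t_0\mapsto0\) and \(t_1,t_2\) linearly independent. By a linear change of coordinates, the types then have genuinely two-dimensional valuations over a two-dimensional family of perturbation directions.
\item \textbf{Counterexample.} Take two agents and three alternatives \(k\). Alternative \(k\) gives agent 1 \(\lambda_k^+=\lambda^\circ+\varepsilon\mu_k\) and agent 2 \(\lambda_k^-=\lambda^\circ-\varepsilon\mu_k\), so that \(\lambda_k^++\lambda_k^-=2\lambda^\circ\). Choose a deterministic BIC allocation whose interim allocation is implementable only with Bayesian averaging. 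The classic source of such examples is multidimensional types with three types: BIC interim constraints only need the expected allocation to be cyclically monotone, while DIC needs ex post cyclical monotonicity for each opponent report. Payoff equivalence pins down interim utilities and, via the surplus constraint together with the balanced (zero-sum in perturbation) structure, forces the ex post allocation to be efficient in a sense that conflicts with ex post 2-cycle monotonicity for some \(t_{-i}\). Concretely, I would pick three directions \(\mu_k\) and a rule under which type \(t_j\) of agent 1 gets alternative \(k\) depending on the opponent's report. Each deviation then pays off only on average, and I would verify the DIC infeasibility with a small linear-programming or Farkas certificate.
\item \textbf{Full type space.} Extend the prior to all of \(T\) by giving the remaining types tiny probability and assigning them the constant alternative (or their best-response report). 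Continuity, together with the strictness of the violated inequalities, keeps the DIC impossibility.
\end{enumerate}

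The main obstacle is the construction in step 2 of the necessity argument: designing the explicit three-alternative deterministic BIC rule and proving that no randomized DIC mechanism matches interim utilities and surplus. I would first try a Farkas-type argument, summing DIC constraints around a type cycle weighted by the prior and combining them with the surplus equality to produce a contradiction proportional to \(\Omega_u\).
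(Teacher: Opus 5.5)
Your sufficiency argument follows the paper: collapse to $M(T)$, average over labels with $\pi_i(\cdot\mid\theta)$, apply \citet[Theorem 2]{gershkov2013equivalence}, and lift. Two small corrections. First, the slope-sign check is settled by moving $\underline f_i M(t)$, with $\underline f_i=\min_a f(\lambda_i(a))$, into the type-only term. Second, BIC in both directions equalizes only the values $V_i(\theta,t)=V_i(\theta,t')$ of same-$M$ labels, not their interim expected slopes; the values are all the averaging needs.

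Your local reduction is also right, with one caveat. The baseline term $\beta_k=u(t_0,z_k)$ depends on the alternative and enters agent 1 as $+\varepsilon\beta_k$ and agent 2 as $-\varepsilon\beta_k$. It must therefore be absorbed into payments, and it drops out of surplus only because of the balance $\lambda_k^++\lambda_k^-=2\lambda^\circ$. Once that is done, the obstruction can be built once for a fixed matrix $H$; nothing needs to be ``proportional to $\Omega_u$''.

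The genuine gap is necessity step 2, which carries the whole converse and which you leave as a plan. Cyclical-monotonicity heuristics cannot settle it, because payoff equivalence lets the DIC mechanism use a completely different, randomized allocation rule. Showing that the given BIC allocation is not ex post implementable therefore proves nothing. What you need is an explicit BIC mechanism together with a linear certificate that every DIC mechanism matching all interim utilities has strictly lower surplus.

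The paper's construction is as follows:
\begin{itemize}
\item It takes $H$ with zero first row and column and lower-right block $C=\left(\begin{smallmatrix}1&2\\-1&1\end{smallmatrix}\right)$, with agent 2 valuing $-H$.
\item It uses the deterministic rule $k^\star$ and payments $p_1^\star(1,2)=2$, $p_1^\star(2,2)=1$, $p_2^\star(2,2)=2$. These give $U_1^\star=(0,\tfrac{1}{3},0)$, $U_2^\star=0$ and $W=\tfrac{2}{3}$.
\item It sums four DIC constraints per agent, chosen so that their left-hand sides telescope into differences of the pinned interim sums. This yields $A(q)\le1$ and $B(q)\le0$, and hence $9W(q)\le2+2+A(q)+B(q)\le5$.
\end{itemize}

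Step 3 also fails as written. Assigning extra types a constant alternative creates new deviations for core types, and new incentive constraints for the extra types that need not hold. Merely giving extra types small probability, even when they are pooled onto their best core reports as you suggest, perturbs the distribution of the opponent's induced core report. Because BIC binds in the obstruction, that can destroy BIC outright. In the paper's mechanism, agent 1's type 1 is indifferent among all reports, and agent 2's type 2 is indifferent between reports 1 and 2. If extra types of agent 1 are pooled onto report 2 without compensating elsewhere, type 2's truthful payoff becomes strictly negative, while reporting 1 still yields $0$.

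The paper's fix is to send each extra type $t$ to a best core report $\phi_i(t)$ and remove exactly the mass $\pi_i^\eta(t)$ from $\phi_i(t)$. Then $(\phi_i)_\#\pi_i^\eta=\operatorname{Unif}(T_i^\circ)$ exactly, and BIC is inherited verbatim.

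On the DIC side, ``continuity'' requires compactness, and payments are a priori unbounded. You must first subtract $\hat p_i(r_i^0,t_{-i})$, after which DIC bounds all payments by $2B$ in absolute value. Only then can you extract a convergent subsequence, and finally re-add constants so that the limit matches interim utilities on the core exactly.
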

When \(r(u)\leq1\), the factorization in \eqref{eq:rank-one-utility-factorization} reduces every social-choice environment to one with scalar types and gross utilities affine in type. I average over any original type labels that share the same scalar value, apply \citet{gershkov2013equivalence}'s BIC-to-DIC equivalence in the scalar environment, then lift the resulting DIC mechanism back to the original types.

When \(r(u)\geq2\), some three types and three prizes form a utility triangle with nonzero determinant. This allows me to carefully calibrate the probabilities of those three prizes to reproduce the two patterns of type-dependent utility differences I use in the counterexample below. First, in \Cref{lem:balanced-obstruction}, I construct that three-type counterexample. Next, I reproduce it in \Cref{lem:balanced-embedding} by using lotteries over the original prizes. Finally, in \Cref{lem:full-support-extension}, I assign positive probability to all remaining types while preserving the fact that no payoff-equivalent DIC mechanism exists.

\begin{corollary}\label{cor:binary-domains}
If \(|T|\leq2\) or \(|X|\leq2\), then \((T,X,u)\) is universally robustifiable for every utility function \(u\colon T\times X\to\mathbf R\). Conversely, if \(|T|\geq3\) and \(|X|\geq3\), there is a utility function on \(T\times X\) that is not universally robustifiable.
\end{corollary}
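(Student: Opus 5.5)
The plan is to derive \Cref{cor:binary-domains} directly from \Cref{thm:maximal-domain}, using the equivalence between universal robustifiability and \(r(u)\leq1\). In both directions everything reduces to a rank count on the utility matrix \(D^{t_\circ,x_\circ}\), whose baselines may be chosen freely by \Cref{lem:incentive-rank-characterization}.

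For the first claim, fix baselines \(t_\circ\in T\) and \(x_\circ\in X\). By definition, \(\Delta_u^{t_\circ,x_\circ}(t_\circ,x)=0\) for every \(x\) and \(\Delta_u^{t_\circ,x_\circ}(t,x_\circ)=0\) for every \(t\). So the row indexed by \(t_\circ\) and the column indexed by \(x_\circ\) are identically zero. If \(|T|\leq2\), at most one row of \(D^{t_\circ,x_\circ}\) is nonzero, hence \(r(u)\leq1\). Symmetrically, if \(|X|\leq2\), at most one column is nonzero. \Cref{thm:maximal-domain} then yields universal robustifiability for every \(u\).

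For the converse, when \(|T|\geq3\) and \(|X|\geq3\), pick distinct types \(t_0,t_1,t_2\) and distinct prizes \(x_0,x_1,x_2\). Define \(u(t_1,x_1)=u(t_2,x_2)=1\) and \(u(t,x)=0\) otherwise. With baselines \((t_0,x_0)\), the terms \(u(t,x_0)\), \(u(t_0,x)\), and \(u(t_0,x_0)\) all vanish, so \(\Delta_u^{t_0,x_0}=u\) entrywise. The \(2\times2\) minor on rows \(t_1,t_2\) and columns \(x_1,x_2\) is therefore the identity, which means \(\Omega_u(t_0,t_1,t_2;x_0,x_1,x_2)=1\neq0\). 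By \Cref{thm:maximal-domain} (condition 4), this \(u\) is not universally robustifiable.

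None of these steps is a real obstacle, since all the substantive work sits in \Cref{thm:maximal-domain}. The only points to check are that the baseline row and column vanish identically, and that the chosen \(u\) leaves the baseline row and column equal to zero, so that the determinant computation is exact.
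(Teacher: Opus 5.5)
Your proposal is correct and matches the paper's proof essentially verbatim. The zero baseline row and column bound the rank by \(\min\{|T|-1,|X|-1\}\), and the converse uses the same identity-block utility on three types and three prizes, whose utility triangle has determinant one, before invoking \Cref{thm:maximal-domain}.
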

And so, failure first becomes possible with three types and three prizes.
\begin{proof}
Fix a baseline type and prize. The associated utility matrix has a zero row and a zero column, so \(r(u)\leq\min\{|T|-1,|X|-1\}\). Hence, if \(|T|\leq2\) or \(|X|\leq2\), then \(r(u)\leq1\), and \Cref{thm:maximal-domain} implies universal robustifiability.

Now suppose that \(|T|\geq3\) and \(|X|\geq3\). Choose distinct \(t_0,t_1,t_2\in T\) and \(x_0,x_1,x_2\in X\), define
\[
(u(t_i,x_j))_{i,j=0}^2=
\begin{pmatrix}
0&0&0\\
0&1&0\\
0&0&1
\end{pmatrix},
\]
and set \(u(t,x)=0\) for all remaining pairs. The utility triangle \((t_0,t_1,t_2;x_0,x_1,x_2)\) has determinant one. By \Cref{thm:maximal-domain}, the resulting domain is not universally robustifiable.
\end{proof}

\begin{corollary}\label{cor:universal-robustifiability-nongeneric}
Suppose that \(|T|\geq3\) and \(|X|\geq3\). Viewed as a subset of \(\mathbf R^{T\times X}\), the universally robustifiable utility functions form a closed algebraic set of codimension \((|T|-2)(|X|-2)\). In particular, this set has empty interior and Lebesgue measure zero.
\end{corollary}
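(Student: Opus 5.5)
By \Cref{thm:maximal-domain}, the universally robustifiable utility functions are exactly those with \(r(u)\leq1\). By \Cref{lem:incentive-rank-characterization}, this set is
\[
\mathcal R\coloneqq\{u\in\mathbf R^{T\times X}:\Omega_u(t_\circ,t_1,t_2;x_\circ,x_1,x_2)=0\ \text{for all } t_1,t_2\in T,\ x_1,x_2\in X\}
\]
for a fixed baseline pair \((t_\circ,x_\circ)\). Each \(\Omega_u\) is a quadratic polynomial in the entries of \(u\), so \(\mathcal R\) is the common zero set of finitely many polynomials. It is therefore a closed algebraic set, and closedness also follows from continuity of these polynomials. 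The remaining work is to compute the codimension, which I do by a linear change of coordinates. Write \(n\coloneqq|T|\) and \(k\coloneqq|X|\).

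\textbf{Linear reduction.} Consider the linear map \(L\colon\mathbf R^{T\times X}\to V\) given by \(u\mapsto D^{t_\circ,x_\circ}\). Here \(V\) is the space of \(T\times X\) matrices whose row \(t_\circ\) and column \(x_\circ\) vanish, so \(V\cong\mathbf R^{(n-1)(k-1)}\).
\begin{itemize}[nosep,noitemsep]
\item \(L\) is surjective: any \(D\in V\) satisfies \(L(D)=D\), since \(\Delta_D^{t_\circ,x_\circ}=D\) when the baseline row and column are zero.
\item Its kernel is \(\{m(t)+g(x)\}\), which has dimension \(n+k-1\). Indeed, \(L(u)=0\) means \(u(t,x)=u(t,x_\circ)+u(t_\circ,x)-u(t_\circ,x_\circ)\).
\end{itemize}
Thus \(\mathcal R=L^{-1}(\mathcal V_1)\), where \(\mathcal V_1\) is the set of matrices of rank at most one in \(V\), identified with \((n-1)\times(k-1)\) matrices. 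After a linear change of coordinates, \(\mathbf R^{T\times X}\cong V\times\ker L\) and \(\mathcal R\cong\mathcal V_1\times\ker L\). Hence the codimension of \(\mathcal R\) equals the codimension of \(\mathcal V_1\) in \(\mathbf R^{(n-1)\times(k-1)}\).

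\textbf{Dimension of the rank-one variety.} This is the step needing the most care, though it is standard. The set \(\mathcal V_1\) is the image of \((a,b)\mapsto ab^\top\) on \(\mathbf R^{n-1}\times\mathbf R^{k-1}\). On pairs with \(a\neq0\) and \(b\neq0\), the fibre is the one-parameter family \((ca,c^{-1}b)\), so the nonzero part of \(\mathcal V_1\) is a smooth manifold of dimension \((n-1)+(k-1)-1=n+k-3\). Adding the origin does not raise the dimension. So \(\dim\mathcal V_1=n+k-3\), and its codimension is
\[
(n-1)(k-1)-(n+k-3)=nk-2n-2k+4=(n-2)(k-2).
\]
Equivalently, \(\dim\mathcal R=n+k-3+(n+k-1)=2n+2k-4\), which gives the same codimension in \(\mathbf R^{nk}\).

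\textbf{Empty interior and measure zero.} Since \(n,k\geq3\), the codimension \((n-2)(k-2)\) is at least one, so \(\mathcal R\) is a proper subset. Concretely, the utility exhibited in \Cref{cor:binary-domains} gives a nonzero \(\Omega_u\), so the polynomial \(P(u)\coloneqq\Omega_u(t_\circ,t_1,t_2;x_\circ,x_1,x_2)\) (for the corresponding fixed types and prizes) is not identically zero. We have \(\mathcal R\subseteq\{P=0\}\).
\begin{itemize}[nosep,noitemsep]
\item The zero set of a nonzero polynomial has Lebesgue measure zero, for instance by induction on the number of variables with Fubini.
\item A polynomial vanishing on a nonempty open set is identically zero, so \(\{P=0\}\) has empty interior.
\end{itemize}
Both properties pass to the subset \(\mathcal R\), which completes the proof of \Cref{cor:universal-robustifiability-nongeneric}.
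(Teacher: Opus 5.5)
Your proposal is correct and follows essentially the same route as the paper: a surjective linear map onto the reduced $(|T|-1)\times(|X|-1)$ utility matrix, the robustifiable set as the preimage of the rank-at-most-one determinantal variety, and the count $\dim = m+n-1$ giving codimension $(|T|-2)(|X|-2)$. Your explicit kernel/product decomposition and your single-polynomial argument for empty interior and measure zero spell out steps the paper only asserts (``taking its inverse image preserves codimension'' and ``this codimension is positive, so\ldots'').
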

And so, beyond binary domains, universal robustifiability is nongeneric.
\begin{proof}
Fix baselines \(t_0\in T\) and \(x_0\in X\), and let \(\Phi(u)\) be the matrix obtained by deleting the zero baseline row and zero baseline column from \(D^{t_0,x_0}\). The linear map \(\Phi\colon\mathbf R^{T\times X}\to\mathbf R^{(|T|-1)\times(|X|-1)}\) is onto: any matrix in the codomain arises by setting the baseline row and column of \(u\) equal to zero and using the prescribed matrix for the remaining entries.

By \Cref{thm:maximal-domain}, \(u\) is universally robustifiable exactly when \(\Phi(u)\) has rank at most one, or equivalently, when every \(2\times2\) minor of \(\Phi(u)\) vanishes. Hence, the universally robustifiable utility functions form the inverse image under \(\Phi\) of the rank-at-most-one determinantal variety.

Set \(m\coloneqq|T|-1\) and \(n\coloneqq|X|-1\). The rank-at-most-one matrices in \(\mathbf R^{m\times n}\) have dimension \(m+n-1\), thus, codimension \(mn-(m+n-1)=(m-1)(n-1)\). Because \(\Phi\) is onto, taking its inverse image preserves codimension. This codimension is positive, so the universally robustifiable utility functions have empty interior and Lebesgue measure zero.
\end{proof}

\section{Sufficiency}

Suppose that \(r(u)\leq1\). The factorization in \eqref{eq:rank-one-utility-factorization} says that all type-dependent reporting tradeoffs operate through the scalar \(M(t)\). Of course, types with the same value of \(M(t)\) may still have different labels and different type-only utility terms, but those differences do not affect their preferences among reports.

There are four steps in the proof of sufficiency. First, I rewrite utility so that the coefficient multiplying \(M(t)\) is nonnegative for every alternative. Second, I collapse the original type space to the scalar space \(M(T)\). Third, because several original types may have the same scalar value, I average the original mechanism over their labels to obtain a well-defined BIC mechanism for the scalar type space. Fourth, I apply the scalar-type BIC-to-DIC result \citep{gershkov2013equivalence} and lift its conclusion back to the original type space.

\begin{proposition}\label{prop:rank-one-robustification}
Suppose that \(r(u)\leq1\). Then \((T,X,u)\) is universally robustifiable.
\end{proposition}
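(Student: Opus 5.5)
I will follow the four steps announced above. Fix an environment \((I,\mathcal A,\lambda)\), an independent full-support prior \(\pi\), and a BIC mechanism \((q,p)\). By \Cref{lem:incentive-rank-characterization}, write \(u(t,x)=M(t)f(x)+m(t)+g(x)\). Then agent \(i\)'s gross utility from alternative \(a\) is \(M(t_i)F_i(a)+m(t_i)+G_i(a)\), where \(F_i(a)\coloneqq\sum_x\lambda_i(a)(x)f(x)\) and \(G_i(a)\coloneqq\sum_x\lambda_i(a)(x)g(x)\).

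\textbf{Step 1 (orientation).} The term \(m(t_i)\) is report-independent, so it can be ignored for incentives. It adds the same constant to interim utilities and to surplus in any two mechanisms, provided the allocation rule is a lottery over \(\mathcal A\). The scalar result of \citet{gershkov2013equivalence} is stated for gross utilities affine in a one-dimensional type, with a monotone structure. I would therefore translate and rescale \(M\): replace \(M\) by \(M-\min M\) and \(F_i\) by \(F_i-\min_a F_i(a)\), absorbing the corrections into \(m\) and \(G_i\). This makes both the type coordinate and the coefficient nonnegative, matching whatever sign convention their theorem uses (for example, utility \(\theta_i v_i(a)+w_i(a)\) with \(\theta_i\) in a finite set of reals).

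\textbf{Steps 2--3 (collapse and average).} Let \(\Theta\coloneqq M(T)\subset\mathbf R\), a finite set, and let \(\mu_i\) be the pushforward of \(\pi_i\) under \(M\); it has full support on \(\Theta\). For \(\theta\in\Theta\) and each agent \(i\), define the scalar mechanism by averaging over the labels in each fiber:
\[
\bar q(\theta)\coloneqq\E_\pi\bigl[q(t)\mid M(t_j)=\theta_j\ \forall j\bigr],\qquad \bar p_i(\theta)\coloneqq\E_\pi\bigl[p_i(t)\mid M(t_j)=\theta_j\ \forall j\bigr].
\]
Independence makes this a product of per-agent conditionals. The key check is that \((\bar q,\bar p)\) is BIC on \(\Theta\). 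By independence, the interim allocation and payment of a scalar report \(\hat\theta_i\) equal the \(\pi_i\)-conditional average over \(t_i'\in M^{-1}(\hat\theta_i)\) of the original interim objects. A type with scalar \(\theta_i\) has preferences over (interim \(F\)-value, interim \(G\)-value, payment) that depend only on \(\theta_i\). All original types in the fiber \(M^{-1}(\theta_i)\) are therefore indifferent among one another's reports, since each weakly prefers its own report. Their reduced interim utility is thus common across the fiber, and it weakly exceeds the utility from any report in another fiber. Averaging then preserves the inequalities. Surplus is linear in \(q\), so it is preserved, and interim utilities, up to the \(m(t_i)\) terms, are preserved fiberwise.

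\textbf{Step 4 (apply and lift).} Apply \citet[Theorem 2]{gershkov2013equivalence} to \((\bar q,\bar p)\) to obtain a DIC \((\tilde q,\tilde p)\) on \(\Theta^I\) with the same interim utilities and the same expected surplus. Lift it by setting \(q^\ast(t)\coloneqq\tilde q(M(t))\) and \(p^\ast_i(t)\coloneqq\tilde p_i(M(t))\). DIC transfers because reports enter only through \(M\) and the utility comparison depends on \(t_i\) only through \(M(t_i)\). Interim utility of \(t_i\) equals the scalar interim utility at \(M(t_i)\) plus \(m(t_i)\), which matches the original by Step 3. Surplus matches because the distribution of \(M(t)\) under \(\pi\) is \(\mu\).

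\textbf{Main obstacle.} I expect the main obstacle to be matching the exact hypotheses of Gershkov et al.: their setting involves social alternatives, linear utility \(\theta_i v_i(a)\) plus possibly a type-independent term, finite or continuum types, and the particular surplus notion preserved. The hardest part is confirming that the prize-only term \(G_i(a)\) is allowed. If it is not, I would fold it into the alternative-specific value by adding an extra type coordinate, or treat it through their notion of a social-surplus-preserving construction, which depends only on the allocation distribution. I would also handle separately the degenerate case \(|\Theta|=1\), where any constant-interim mechanism is trivially robustified by using interim payments.
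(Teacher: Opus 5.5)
Your proposal is correct and takes essentially the same route as the paper. Both arguments normalize so that the coefficient on \(M(t)\) is nonnegative, then collapse to \(\Theta=M(T)\) by averaging over labels within each fiber, using BIC in both directions to make the reduced interim utility constant on the fiber. Both then apply \citet[Theorem 2]{gershkov2013equivalence} and lift the result back through \(M\).

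The obstacle you flag does not arise. That theorem's linear environment already has gross utilities of the form \(a_i^k\theta_i+c_i^k\) with \(a_i^k\geq0\), so the prize-only term \(G_i(a)\) enters directly as \(c_i^a\), exactly as in the paper. Your fallback of adding an extra type coordinate would have broken one-dimensionality, so it is good that it is not needed. Shifting \(M\) to be nonnegative is harmless but also unnecessary.
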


\begin{proof}[Proof of \Cref{prop:rank-one-robustification}] 
Fix a finite social-choice environment, an independent full-support prior, and a BIC mechanism \((q,p)\). By \Cref{lem:incentive-rank-characterization}, choose \(M,m,f,g\) satisfying \eqref{eq:rank-one-utility-factorization}. For \(\lambda\in\Delta(X)\), define \(f(\lambda)\coloneqq\sum_{x\in X}\lambda(x)f(x)\) and \(g(\lambda)\coloneqq\sum_{x\in X}\lambda(x)g(x)\). Then \(u(t,\lambda)=\sum_{x\in X}\lambda(x)u(t,x)=M(t)f(\lambda)+m(t)+g(\lambda)\) for every \(t\in T\) and \(\lambda\in\Delta(X)\).

For every agent \(i\in I\), define \(\underline f_i
\coloneqq
\min_{a\in\mathcal A}f(\lambda_i(a))\), \(a_i^a
\coloneqq
f(\lambda_i(a))-\underline f_i\), \(c_i^a\coloneqq g(\lambda_i(a))\), and \(\mu_i(t)\coloneqq m(t)+\underline f_iM(t)\). Accordingly, \(a_i^a\geq0\), and
\[
u(t,\lambda_i(a))
=
a_i^aM(t)+\mu_i(t)+c_i^a.
\tag{2}\label{eq:rank-one-linear-utility}
\]
Let \(\Theta\coloneqq M(T)\), let \(\overline\pi_i\) be the distribution of \(\theta_i=M(t_i)\) induced by \(\pi_i\), and write \(\overline\pi\coloneqq\prod_{i\in I}\overline\pi_i\). For each \(\theta\in\Theta\), let \(\pi_i(\cdot\mid\theta)\) denote the conditional distribution of \(t_i\) given \(M(t_i)=\theta\).

For type \(\theta\in\Theta\) and a report \(t_i\in T\), define the part of interim utility that omits the report-independent term \(\mu_i\) by
\[
V_i(\theta,t_i)
\coloneqq
\sum_{t_{-i}\in T^{I\setminus\{i\}}}
\pi_{-i}(t_{-i})
\left(
\sum_{a\in\mathcal A}
q^a(t_i,t_{-i})
(a_i^a\theta+c_i^a)
-p_i(t_i,t_{-i})
\right).
\]
If \(t_i\) and \(t_i'\) satisfy \(M(t_i)=M(t_i')=\theta\), BIC applied in both directions produces \(V_i(\theta,t_i)=V_i(\theta,t_i')\). Denote this common value by \(V_i(\theta)\). More generally, BIC implies \(V_i(\theta)\geq V_i(\theta,\hat t_i)\) for every \(\hat t_i\in T\). \eqref{eq:rank-one-linear-utility} also yields
\[
U_i^{q,p}(t_i)
=
\mu_i(t_i)+V_i(M(t_i)).
\tag{3}\label{eq:rank-one-original-interim-utility}
\]

Construct a direct mechanism \((\overline q,\overline p)\) on \(\Theta\) by independently drawing, conditional on every reported scalar type \(\hat\theta_i\), a label \(\hat t_i\) according to \(\pi_i(\cdot\mid\hat\theta_i)\), and then applying \((q,p)\). Equivalently,
\[
\overline q^a(\hat\theta)
\coloneqq
\sum_{\substack{\hat t\in T^I\\M(\hat t_i)=\hat\theta_i\ \forall i}}
\prod_{i\in I}\pi_i(\hat t_i\mid\hat\theta_i)
q^a\left(\hat{t}\right) \quad \text{and} \quad
\overline p_i(\hat\theta)
\coloneqq
\sum_{\substack{\hat t\in T^I\\M(\hat t_j)=\hat\theta_j\ \forall j}}
\prod_{j\in I}\pi_j(\hat t_j\mid\hat\theta_j)
p_i\left(\hat{t}\right).
\]
Truthfully reporting \(\theta\) yields the reduced interim utility \(V_i(\theta)\), while reporting \(\hat\theta\) yields an average of \(V_i(\theta,\hat t_i)\) over (original-type) labels satisfying \(M(\hat t_i)=\hat\theta\). Hence, \((\overline q,\overline p)\) is BIC in the scalar-type environment with gross utilities \(a_i^a\theta_i+c_i^a\).

Because the original types are independent, drawing each label conditional on its scalar type reproduces the original joint distribution of labels. Consequently,
\[\sum_{\theta\in\Theta^I}
\overline\pi(\theta)
\sum_{a\in\mathcal A}
\overline q^a(\theta)
\sum_{i\in I}(a_i^a\theta_i+c_i^a) =
\sum_{t\in T^I}
\pi(t)
\sum_{a\in\mathcal A}
q^a(t)
\sum_{i\in I}(a_i^aM(t_i)+c_i^a).
\tag{4}\label{eq:rank-one-reduced-surplus}
\]

The reduced environment has independent one-dimensional types with finite supports and coefficients \(a_i^a\geq0\). Apply \citet[Theorem 2]{gershkov2013equivalence} to the mechanism \((\overline q,-\overline p)\), where the second component is the vector of transfers received by the agents. The theorem provides a DIC mechanism \((q^\star,\tau^\star)\) that preserves every scalar type's interim expected utility and \textit{ex ante} expected gross social surplus. Set \(p_i^\star\coloneqq-\tau_i^\star\) for every agent \(i\).

Lift this mechanism to the original type space by setting \(\tilde q^a(t) \coloneqq q^{\star,a}((M(t_j))_{j\in I})\) and \(\tilde p_i(t) \coloneqq p_i^\star((M(t_j))_{j\in I})\). For a true type \(t_i\), the term \(\mu_i(t_i)\) in \eqref{eq:rank-one-linear-utility} does not depend on the report or alternative. Consequently, DIC of \((q^\star,p^\star)\) implies DIC of \((\tilde q,\tilde p)\).

The lifted mechanism gives type \(t_i\) interim utility \(\mu_i(t_i)+V_i(M(t_i))\), which equals the interim utility of \(t_i\) under \((q,p)\) by \eqref{eq:rank-one-original-interim-utility}. Expected gross social surplus under \((q,p)\) equals
\[
\sum_{i\in I}\E_{\pi_i}[\mu_i(t_i)]
+
\sum_{t\in T^I}\pi(t)
\sum_{a\in\mathcal A}q^a(t)
\sum_{i\in I}(a_i^aM(t_i)+c_i^a),
\]
while expected gross social surplus under \((\tilde q,\tilde p)\) equals \(\sum_{i\in I}\E_{\pi_i}[\mu_i(t_i)]\) plus the expected gross social surplus generated by \(q^\star\) in the reduced environment. The DIC mechanism supplied by \citet[Theorem 2]{gershkov2013equivalence} has an allocation rule \(q^\star\) that generates the same expected gross social surplus as \(\overline q\) in that environment. Hence, \eqref{eq:rank-one-reduced-surplus} implies that \((q,p)\) and \((\tilde q,\tilde p)\) generate the same expected gross social surplus, so they are payoff-equivalent.\end{proof}

\section{Necessity}

\subsection{A three-type obstruction}\label{sec:3to}
I now turn to necessity. I first construct a three-type example in which a BIC mechanism has no payoff-equivalent DIC mechanism. Consider two agents with common set of types \(\mathcal T\coloneqq\{0,1,2\}\) and set of alternatives \(\mathcal K\coloneqq\{0,1,2\}\). Types are independently and uniformly distributed. Let
\[
H\coloneqq
\begin{pmatrix}
0&0&0\\
0&1&2\\
0&-1&1
\end{pmatrix}.
\]
If agent \(1\) has type \(r\in\mathcal T\) and alternative \(k\in\mathcal K\) is selected, her gross utility is \(H_{rk}\). If agent \(2\) has type \(s\in\mathcal T\), his gross utility is \(-H_{sk}\). Let \(\mathcal E_H\) denote the two-agent environment with common type set \(\mathcal T\), alternative set \(\mathcal K\), independent uniform priors, and gross utilities \(H_{rk}\) and \(-H_{sk}\) for agents \(1\) and \(2\), respectively.

A direct mechanism consists of an allocation rule \(q\) and payments \(p_i(r,s)\) made by the agents. For \((r,s)\in\mathcal T^2\), write \(q_{rs}\in\Delta(\mathcal K)\) for the lottery over alternatives and \(q_{rs}^k\) for the probability of alternative \(k\). Write
\[v^1_{rs} \coloneqq\sum_{k\in\mathcal K}q_{rs}^kH_{rk}-p_1(r,s) \qquad \text{and} \qquad v^2_{rs} \coloneqq-\sum_{k\in\mathcal K}q_{rs}^kH_{sk}-p_2(r,s)\]
for the truthful \textit{ex post} utilities. The truthful interim utilities are
\[
U_1(r)\coloneqq\frac{1}{3}\sum_{s\in\mathcal T}v^1_{rs}
\qquad \text{and} \qquad
U_2(s)\coloneqq\frac{1}{3}\sum_{r\in\mathcal T}v^2_{rs},
\]
and expected gross social surplus is
\(W(q)
\coloneqq
\frac{1}{9}
\sum_{r,s\in\mathcal T}
\sum_{k\in\mathcal K}
q_{rs}^k(H_{rk}-H_{sk})\).

\begin{lemma}\label{lem:balanced-obstruction}
There is a deterministic BIC mechanism with interim utility vectors
\[
U_1^\star=\left(0,\frac{1}{3},0\right)
\qquad \text{and} \qquad
U_2^\star=(0,0,0),
\]
and expected gross social surplus \(W(q^\star)=\frac{2}{3}\). Every DIC mechanism with the same interim utility vectors satisfies \(W(q)\leq\frac{5}{9}\).
\end{lemma}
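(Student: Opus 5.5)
The plan is to prove both halves of \Cref{lem:balanced-obstruction} by finite linear-programming arguments in the environment \(\mathcal E_H\).

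\textbf{Existence of the BIC mechanism.} I would write down an explicit deterministic allocation \(q^\star\colon\mathcal T^2\to\mathcal K\) and payments \(p_1,p_2\), and then check the claims directly. Each agent has six BIC inequalities.

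To choose the table, I read off preferences from \(H\):
\begin{itemize}[nosep,noitemsep]
\item Agent \(1\): type \(0\) is indifferent among alternatives, and types \(1\) and \(2\) both rank alternative \(2\) highest.
\item Agent \(2\): type \(1\) prefers alternative \(0\), type \(2\) prefers alternative \(1\), and type \(0\) is indifferent.
\end{itemize}
A profile \((r,s)\) contributes \(H_{rk}-H_{sk}\) to surplus, and we need these contributions to total \(6\) over the nine profiles. So I would pick \(q^\star\) to be efficient, or nearly so, on most profiles. I would then deliberately distort a few cells so that Bayesian incentives can be met only by averaging over the opponent's type, not ex post.

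Given \(q^\star\), the payments are determined by a small linear system. First, compute each agent's interim expected allocation. Second, choose interim payments so that the interim utilities are \((0,\tfrac13,0)\) and \((0,0,0)\). Third, spread the interim payments across the opponent's types, for instance as constants in the opponent's type. Finally, verify the BIC inequalities. Everything here is explicit arithmetic.

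\textbf{The DIC bound.} I would treat the bound as an LP. The variables are \(q_{rs}\in\Delta(\mathcal K)\) for all nine profiles and the payments \(p_i(r,s)\). The constraints are all ex post DIC inequalities for both agents (six per opponent type per agent) and the six interim-utility equalities \(U_1=U_1^\star\), \(U_2=U_2^\star\). The objective is \(W(q)\).

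Payments enter only through the DIC constraints and the interim equalities. Surplus involves only \(q\). The certificate is therefore a set of nonnegative multipliers on the DIC constraints, together with free multipliers on the interim equalities and on the simplex constraints \(\sum_k q^k_{rs}=1\), satisfying three conditions:
\begin{enumerate}[nosep,noitemsep]
\item every payment variable cancels;
\item the coefficient on each \(q^k_{rs}\) dominates its coefficient in \(W\);
\item the resulting constant equals \(\tfrac59\).
\end{enumerate}
Equivalently, I would first eliminate payments. For each fixed \(s\), agent \(1\)'s DIC says that the ex post utilities \(v^1_{\cdot s}\) are "cyclically monotone" with respect to \(H\). Because type \(0\) is indifferent, DIC pins \(v^1_{0s}\) against the payment levels of types \(1,2\). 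This yields upper bounds on the ex post rent differences \(v^1_{1s}-v^1_{0s}\) and \(v^1_{2s}-v^1_{0s}\) in terms of \((H_1-H_0)\cdot q_{1s}\), \((H_2-H_0)\cdot q_{2s}\) and the cross terms. Averaging over \(s\) and imposing \(U_1(1)-U_1(0)=\tfrac13\) and \(U_1(2)=U_1(0)\) gives linear constraints on \(q\) alone. The symmetric step for agent \(2\) uses \(-H\) and \(U_2\equiv0\). The result is a pure allocation LP whose value I then bound by \(\tfrac59\).

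\textbf{Main obstacle.} The hardest step is finding the dual certificate, meaning the right weights on the DIC inequalities. The interim equalities force agent \(1\)'s type \(1\) to collect a rent of exactly \(\tfrac13\) on average and type \(2\) none. Under DIC this rent must be generated profile by profile, and that costs surplus. I would first try to solve the small LP numerically, or by hand using the natural pairwise DIC constraints \(1\leftrightarrow2\), \(0\leftrightarrow1\), \(0\leftrightarrow2\) for each opponent type. I would read off the binding constraints and use their multipliers, rescaled to rationals, as the certificate. I would then verify by hand that the payment coefficients cancel and that the bound is \(\tfrac59<\tfrac23\).
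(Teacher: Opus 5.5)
You have the right architecture: an explicit deterministic BIC mechanism checked by arithmetic, and a dual certificate obtained by summing DIC inequalities so that the \emph{ex post} utilities \(v^i_{rs}\) telescope into the interim sums fixed by \(U_1^\star,U_2^\star\). But you supply neither object, and together they are the whole content of the lemma. You never write down \(q^\star\) or the payments, so neither BIC nor \(W(q^\star)=\tfrac23\) is established. You also explicitly defer the certificate to a numerical solve.

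Your heuristic for the mechanism is also off. Since \(\sum_{r,s}\max_k(H_{rk}-H_{sk})=6\), the target \(W=\tfrac23\) is first-best. It therefore forces the unique efficient alternative on every off-diagonal profile, so ``distorting a few cells'' is only possible on the diagonal, where every alternative gives zero surplus. That diagonal tie-break is exactly what must be chosen with care. For instance, selecting alternative \(2\) at \((1,1)\) forces an interim subsidy of \(\tfrac23\) to agent \(2\)'s type \(1\), which type \(2\) then strictly prefers to mimic. The paper selects alternative \(0\) on the whole diagonal, so the rows \(r=0,1,2\) of \(k^\star\) are \((0,0,1)\), \((2,0,1)\), \((2,0,0)\). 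It charges \(p_1(1,2)=2\), \(p_1(2,2)=1\), \(p_2(2,2)=2\). The BIC--DIC wedge comes from the interim-utility targets, not from any distortion.

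For the bound, the missing idea is which DIC constraints to combine.
\begin{itemize}
\item For agent \(1\), take type \(1\) against report \(0\) at \(s=1\) and at \(s=2\). At \(s=0\), route through type \(2\): type \(1\) against report \(2\), chained with type \(2\) against report \(0\). The left-hand sides telescope to \(3(U_1(1)-U_1(0))=1\), giving \(A(q)\le1\).
\item For agent \(2\), at \(r=0\) chain type \(2\) against report \(0\) with type \(0\) against report \(1\). At \(r=1,2\), take type \(2\) against report \(1\). These telescope to \(3(U_2(2)-U_2(1))=0\), giving \(B(q)\le0\).
\item Then split \(9W(q)\) into the two efficient cells \((1,0)\) and \((1,2)\), each at most \(2\) by the simplex constraint, plus a remainder \(\Gamma(q)\). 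Since \(A(q)+B(q)-\Gamma(q)\geq0\) term by term, \(9W(q)\le 4+A(q)+B(q)\le5\).
\end{itemize}
Until you exhibit such a certificate and the mechanism itself, your proposal is a search procedure rather than a proof. As written, it establishes neither half of the lemma.
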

That is, the BIC mechanism has no payoff-equivalent DIC mechanism. In the proof, I first display a deterministic BIC mechanism and compute its interim utilities and surplus. I then derive two inequalities that every DIC mechanism with the same interim utilities must satisfy and use them to show that its surplus is at most \(5/9\).

\begin{proof}[Proof of \Cref{lem:balanced-obstruction}]
Let \(q^\star\) choose \(k^\star(r,s)\) with probability one, where
\[
\begin{array}{c|ccc}
k^\star(r,s)&s=0&s=1&s=2\\
\hline
r=0&0&0&1\\
r=1&2&0&1\\
r=2&2&0&0
\end{array}.
\]
Charge agent \(1\) the payments \(p_1^\star(1,2)=2\) and \(p_1^\star(2,2)=1\), and charge agent \(2\) the payment \(p_2^\star(2,2)=2\). All other payments are zero.

Let \(\mathcal U_1(r,\hat r)\) denote the interim utility of agent \(1\) with true type \(r\) from reporting \(\hat r\), and define \(\mathcal U_2(s,\hat s)\) analogously. Directly,
\[
(\mathcal U_1(r,\hat r))_{r,\hat r\in\mathcal T}
=
\frac{1}{3}
\begin{pmatrix}
0&-2&-1\\
1&1&1\\
-1&-2&0
\end{pmatrix}
\quad \text{and} \quad
(\mathcal U_2(s,\hat s))_{s,\hat s\in\mathcal T}
=
\frac{1}{3}
\begin{pmatrix}
0&0&-2\\
-4&0&-4\\
-2&0&0
\end{pmatrix}.
\]
Each diagonal entry is (weakly) maximal in its row. The mechanism is, therefore, BIC, with \(U_1^\star=(0,\frac{1}{3},0)\) and \(U_2^\star=(0,0,0)\).

The gross social surplus generated by \(q^\star\) at each type profile is
\[
(H_{r,k^\star(r,s)}-H_{s,k^\star(r,s)})_{r,s\in\mathcal T}
=
\begin{pmatrix}
0&0&1\\
2&0&2\\
1&0&0
\end{pmatrix}.
\]
Consequently, \(W(q^\star)=\frac{6}{9}=\frac{2}{3}\).

Now consider any DIC mechanism \((q,p)\) with the same interim utility vectors. Its truthful \textit{ex post} utilities satisfy
\[
\begin{aligned}
&\sum_{s\in\mathcal T}v^1_{0s}=0,
\qquad
\sum_{s\in\mathcal T}v^1_{1s}=1,
\qquad
\sum_{s\in\mathcal T}v^1_{2s}=0,\\
&\sum_{r\in\mathcal T}v^2_{r0}=0,
\qquad
\sum_{r\in\mathcal T}v^2_{r1}=0,
\qquad
\sum_{r\in\mathcal T}v^2_{r2}=0.
\end{aligned}
\tag{5}\label{eq:balanced-target-interim-sums}
\]

For agent \(1\), DIC is equivalent to
\[
v^1_{rs}-v^1_{\hat r s}
\geq
\sum_{k\in\mathcal K}
q_{\hat r s}^k(H_{rk}-H_{\hat r k}) \quad \text{for every } r,\hat{r}, s\in\mathcal T
\tag{6}\label{eq:balanced-agent-one-dic}
\]
Applying \eqref{eq:balanced-agent-one-dic} at \((r,\hat r,s) \in \{(1,0,1),(1,0,2),(1,2,0),(2,0,0)\}\) produces
\[
\begin{aligned}
v^1_{11}-v^1_{01}&\geq q_{01}^1+2q_{01}^2, \qquad
v^1_{12}-v^1_{02} \geq q_{02}^1+2q_{02}^2,\\
v^1_{10}-v^1_{20}&\geq2q_{20}^1+q_{20}^2,\qquad
v^1_{20}-v^1_{00} \geq-q_{00}^1+q_{00}^2.
\end{aligned}
\]
Adding the four inequalities obtained from \eqref{eq:balanced-agent-one-dic} at \((1,0,1)\), \((1,0,2)\), \((1,2,0)\), and \((2,0,0)\) and then applying \eqref{eq:balanced-target-interim-sums} to the resulting left-hand side yields
\[A(q) \coloneqq
q_{01}^1+2q_{01}^2
+q_{02}^1+2q_{02}^2
+2q_{20}^1+q_{20}^2 -q_{00}^1+q_{00}^2 \leq 1.
\tag{7}\label{eq:balanced-agent-one-certificate}
\]

For agent \(2\), DIC is equivalent to
\[
v^2_{rs}-v^2_{r\hat s}
\geq
\sum_{k\in\mathcal K}
q_{r\hat s}^k(H_{\hat s k}-H_{sk}) \quad \text{for every } r,s,\hat s\in\mathcal T.
\tag{8}\label{eq:balanced-agent-two-dic}
\]
Applying \eqref{eq:balanced-agent-two-dic} at \((r,s,\hat s)
\in
\{(0,0,1),(0,2,0),(1,2,1),(2,2,1)\}\) produces \[
\begin{aligned}
v^2_{00}-v^2_{01}&\geq q_{01}^1+2q_{01}^2, \qquad
v^2_{02}-v^2_{00} \geq q_{00}^1-q_{00}^2,\\
v^2_{12}-v^2_{11}&\geq2q_{11}^1+q_{11}^2,\qquad
v^2_{22}-v^2_{21} \geq2q_{21}^1+q_{21}^2.
\end{aligned}
\]
Adding the four inequalities obtained from \eqref{eq:balanced-agent-two-dic} at \((0,0,1)\), \((0,2,0)\), \((1,2,1)\), and \((2,2,1)\) and then applying \eqref{eq:balanced-target-interim-sums} to the resulting left-hand side yields
\[B(q) \coloneqq
q_{01}^1+2q_{01}^2
+q_{00}^1-q_{00}^2
+2q_{11}^1+q_{11}^2
+2q_{21}^1+q_{21}^2 \leq0.
\tag{9}\label{eq:balanced-agent-two-certificate}
\]

Set
\[\Gamma(q)\coloneqq -q_{01}^1-2q_{01}^2
+q_{02}^1-q_{02}^2
-q_{20}^1+q_{20}^2 -2q_{21}^1-q_{21}^2.
\]
Substituting the entries of \(H\) into the definition of \(W(q)\) and collecting terms yields
\[
9W(q)
=
q_{10}^1+2q_{10}^2
+2q_{12}^1+q_{12}^2
+\Gamma(q).
\tag{10}\label{eq:balanced-surplus-expansion}
\]
Because \(q_{rs}\in\Delta(\mathcal K)\),
\(q_{10}^1+2q_{10}^2\leq2\) and \(2q_{12}^1+q_{12}^2\leq2\). Moreover,
\[A(q)+B(q)-\Gamma(q)= 3q_{01}^1+6q_{01}^2
+3q_{02}^2
+2q_{11}^1+q_{11}^2 +3q_{20}^1
+4q_{21}^1+2q_{21}^2 \geq 0.
\]
The expansion of \(A(q)+B(q)-\Gamma(q)\) has only nonnegative terms, so \(\Gamma(q)\leq A(q)+B(q)\). Substituting \(q_{10}^1+2q_{10}^2\leq2\), \(2q_{12}^1+q_{12}^2\leq2\), and \(\Gamma(q)\leq A(q)+B(q)\) into \eqref{eq:balanced-surplus-expansion} and then applying \eqref{eq:balanced-agent-one-certificate} and \eqref{eq:balanced-agent-two-certificate} produces \(9W(q)\leq2+2+A(q)+B(q)\leq5\). Consequently, \(W(q)\leq\frac{5}{9}<\frac{2}{3}=W(q^\star)\), so no DIC mechanism with the same interim utility vectors is payoff-equivalent to \((q^\star,p^\star)\).
\end{proof}

\subsection{Embedding the obstruction}


Retain the matrix \(H\), the type set \(\mathcal T\), and the alternative set \(\mathcal K\) used in \secref{sec:3to} to define \(\mathcal E_H\). Fix types \(t_0,t_1,t_2\in T\) and prizes \(x_0,x_1,x_2\in X\). Define
\[
D^\circ
\coloneqq
\begin{pmatrix}
\Delta_u^{t_0,x_0}(t_1,x_1)&\Delta_u^{t_0,x_0}(t_1,x_2)\\
\Delta_u^{t_0,x_0}(t_2,x_1)&\Delta_u^{t_0,x_0}(t_2,x_2)
\end{pmatrix}.
\]
Suppose that \(\det D^\circ\neq0\). Let
\[
C\coloneqq(H_{rk})_{r,k\in\{1,2\}}
=
\begin{pmatrix}
1&2\\
-1&1
\end{pmatrix}
\qquad\text{and}\qquad
L\coloneqq(D^\circ)^{-1}C.
\]
Define the signed measures \(z_0,z_1,z_2\) by \(z_0\coloneqq0\) and
\(z_k\coloneqq\sum_{j=1}^2L_{jk}(\delta_{x_j}-\delta_{x_0})\) for
\(k\in\{1,2\}\).\footnote{Here, \(\delta_x\) assigns probability one to \(x\). Each \(z_k\) has total mass zero.} Set \(\beta_k\coloneqq u(t_0,z_k)\) for every \(k\in\mathcal K\).

Choose a lottery \(\lambda^\circ\in\Delta(X)\) that assigns strictly positive probability to \(x_0,x_1,x_2\). For sufficiently small \(\varepsilon>0\), define
\(\lambda_k^+\coloneqq\lambda^\circ+\varepsilon z_k\) and \(\lambda_k^-\coloneqq\lambda^\circ-\varepsilon z_k\), which belong to \(\Delta(X)\) for every \(k\in\mathcal K\). For \(r\in\mathcal T\), set \(a_r\coloneqq u(t_r,\lambda^\circ)\).

Consider the two-agent environment in which both agents have the set of types \(\{t_0,t_1,t_2\}\), types are independently and uniformly distributed, and alternative \(k\) assigns \(\lambda_k^+\) to agent \(1\) and \(\lambda_k^-\) to agent \(2\). Identify each \(r\in\mathcal T\) with \(t_r\), and write \(\mathcal E_\lambda\) for this environment.

\begin{lemma}\label{lem:balanced-embedding}
There is a bijection between direct mechanisms in \(\mathcal E_H\) and \(\mathcal E_\lambda\) that preserves allocation rules, BIC, DIC, and payoff equivalence. Moreover, \(\lambda_k^++\lambda_k^-=2\lambda^\circ\) as finite measures on \(X\) for every \(k\in\mathcal K\).
\end{lemma}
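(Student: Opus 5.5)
Write agent \(1\)'s gross utility in \(\mathcal E_\lambda\) as an affine transformation of \(\mathcal E_H\)'s. The mass-balance claim \(\lambda_k^++\lambda_k^-=2\lambda^\circ\) is immediate from the definitions. Membership in \(\Delta(X)\) for small \(\varepsilon\) holds because each \(z_k\) has total mass zero and is supported on \(\{x_0,x_1,x_2\}\), where \(\lambda^\circ\) is strictly positive.

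The core computation is as follows. For \(r\in\mathcal T\) and \(k\in\mathcal K\), expand
\[
u(t_r,z_k)=\sum_{j=1}^2L_{jk}\bigl(u(t_r,x_j)-u(t_r,x_0)\bigr).
\]
Subtracting \(\beta_k=u(t_0,z_k)\) gives
\[
u(t_r,z_k)-\beta_k=\sum_{j=1}^{2}L_{jk}\,\Delta_u^{t_0,x_0}(t_r,x_j),
\]
which is \((D^\circ L)_{rk}=C_{rk}=H_{rk}\) for \(r,k\in\{1,2\}\). The cases \(r=0\) or \(k=0\) give \(0=H_{rk}\), because the baseline row vanishes and \(z_0=0\). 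Hence
\[
u(t_r,\lambda_k^+)=a_r+\varepsilon\beta_k+\varepsilon H_{rk}
\qquad\text{and}\qquad
u(t_s,\lambda_k^-)=a_s-\varepsilon\beta_k-\varepsilon H_{sk}.
\]
So in \(\mathcal E_\lambda\), agent \(1\)'s gross utility equals \(\varepsilon\) times her \(\mathcal E_H\) utility, plus a type-only term \(a_r\), plus an alternative-only term \(\varepsilon\beta_k\) that is common to all her types. Agent \(2\) is symmetric, with \(-\varepsilon\beta_k\).

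Next I define the bijection. Keep \(q\), identifying \(r\) with \(t_r\), and rescale payments by
\[
\tilde p_1(r,s)\coloneqq\varepsilon p_1(r,s)-\varepsilon\sum_k q^k_{rs}\beta_k,
\qquad
\tilde p_2(r,s)\coloneqq\varepsilon p_2(r,s)+\varepsilon\sum_k q^k_{rs}\beta_k.
\]
These choices absorb the alternative-only terms into the payments. The map is invertible because \(\varepsilon>0\) and \(q\) is unchanged. Under it, the \textit{ex post} utility of true type \(r\) reporting \(\hat r\) in \(\mathcal E_\lambda\) equals \(a_r\) plus \(\varepsilon\) times the corresponding \(\mathcal E_H\) utility. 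The \(\beta\) terms cancel because they depend only on the report profile through \(q\), not on the true type. A positive scaling plus a report-independent shift preserves every DIC inequality and, after averaging, every BIC inequality, in both directions.

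For payoff equivalence, interim utilities satisfy \(\tilde U_i(t)=a_t+\varepsilon U_i(t)\), so equality of interim vectors is preserved both ways. For surplus, the \(\pm\varepsilon\beta_k\) terms cancel across the two agents; this is exactly where the balanced construction is used. The \(\mathcal E_\lambda\) surplus is therefore \(2\,\E[a_t]+\varepsilon W(q)\), an increasing affine function of \(W(q)\), so surplus equality is preserved.

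The only real obstacle is the bookkeeping in the identity \(u(t_r,z_k)-\beta_k=H_{rk}\), including the zero row and column cases. I would verify it first, since everything else is mechanical.
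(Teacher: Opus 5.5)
Your route is the paper's route. The following parts are correct:
\begin{itemize}
\item the identity $u(t_r,z_k)-\beta_k=H_{rk}$, including the $r=0$ and $k=0$ cases;
\item the gross-utility formulas for $\lambda_k^{\pm}$;
\item the mass-balance and $\Delta(X)$-membership checks;
\item the surplus formula $2\,\E[a_t]+\varepsilon W(q)$.
\end{itemize}
However, your payment embedding puts the wrong sign on the $\beta$ correction, so the central cancellation fails as written. Payments $p_i$ are made by the agents and enter utility with a minus sign. With your $\tilde p_1$, agent $1$ of true type $t_r$ facing report profile $(\hat r,\hat s)$ gets
\[
\sum_{k}q^k_{\hat r\hat s}\,u(t_r,\lambda_k^+)-\tilde p_1(\hat r,\hat s)
=a_r+\varepsilon\Bigl(\sum_k q^k_{\hat r\hat s}H_{rk}-p_1(\hat r,\hat s)\Bigr)+2\varepsilon\sum_k\beta_k q^k_{\hat r\hat s}.
\]
The alternative-only terms double instead of cancelling, and agent $2$ picks up $-2\varepsilon\sum_k\beta_kq^k_{\hat r\hat s}$ in the same way.

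The leftover term varies with the report through $q_{\hat r\hat s}$, so it is not a report-independent shift. Each DIC inequality acquires the extra term $2\varepsilon\sum_k\beta_k\bigl(q^k_{rs}-q^k_{\hat r s}\bigr)$. Each interim utility acquires $\frac{2\varepsilon}{3}\sum_s\sum_k\beta_kq^k_{rs}$, which differs across mechanisms with different allocation rules. Hence, unless $\beta\equiv0$, your map preserves neither BIC, DIC, nor payoff equivalence. Your claim that true type $r$'s ex post utility in $\mathcal E_\lambda$ equals $a_r$ plus $\varepsilon$ times its $\mathcal E_H$ utility is false for the payments you defined.

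The fix is to charge agent $1$ the extra term and credit agent $2$:
\[
\tilde p_1=\varepsilon p_1+\varepsilon\sum_k\beta_kq^k \qquad\text{and}\qquad \tilde p_2=\varepsilon p_2-\varepsilon\sum_k\beta_kq^k,
\]
which is the paper's payment embedding. With that correction, your stated reason for the cancellation is right: $\beta_k$ is common to all true types, so a report-contingent payment offsets it for every true type at once. The rest of your argument then coincides with the paper's proof.
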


\begin{proof}[Proof of \Cref{lem:balanced-embedding}]
For \(r,k\in\{1,2\}\), by the definitions of \(D^\circ\), \(L\), and \(z_k\),
\[u(t_r,z_k)-u(t_0,z_k) =\sum_{j=1}^2L_{jk}\Delta_u^{t_0,x_0}(t_r,x_j) =(D^\circ L)_{rk} =C_{rk} =H_{rk}.\]
When \(r=0\), both \(u(t_r,z_k)-u(t_0,z_k)\) and \(H_{rk}\) equal zero. When \(k=0\), \(z_0=0\), so \(\beta_0=0\) and \(H_{r0}=0\). Hence,
\[
u(t_r,z_k)=\beta_k+H_{rk}
\quad\text{for every }r\in\mathcal T\text{ and }k\in\mathcal K.
\tag{11}\label{eq:balanced-direction-utilities}
\]

By construction, \(\lambda_k^++\lambda_k^-=2\lambda^\circ\) as finite measures on \(X\). Linearity of expected utility and \eqref{eq:balanced-direction-utilities} imply
\[u(t_r,\lambda_k^+) =a_r+\varepsilon\beta_k+\varepsilon H_{rk} \qquad \text{and} \qquad u(t_r,\lambda_k^-) =a_r-\varepsilon\beta_k-\varepsilon H_{rk}.
\tag{12}\label{eq:balanced-gross-utility-embedding}
\]

Fix a direct mechanism \((q,p)\) in \(\mathcal E_H\). Define payments in \(\mathcal E_\lambda\) by
\[\tilde p_1(\hat r,\hat s) \coloneqq
\varepsilon p_1(\hat r,\hat s)
+\varepsilon\sum_{k\in\mathcal K}\beta_kq_{\hat r\hat s}^k,\qquad \text{and} \qquad
\tilde p_2(\hat r,\hat s) \coloneqq
\varepsilon p_2(\hat r,\hat s)
-\varepsilon\sum_{k\in\mathcal K}\beta_kq_{\hat r\hat s}^k.
\tag{13}\label{eq:balanced-payment-embedding}
\]
For each fixed allocation rule, \eqref{eq:balanced-payment-embedding} is invertible and, therefore, defines a bijection between mechanisms.

For true types \(t_r,t_s\) and reports \((\hat r,\hat s)\), equations \eqref{eq:balanced-gross-utility-embedding} and \eqref{eq:balanced-payment-embedding} yield
\[
\begin{aligned}
\sum_{k\in\mathcal K}q_{\hat r\hat s}^k
u(t_r,\lambda_k^+)
-\tilde p_1(\hat r,\hat s) &=
a_r+\varepsilon
\left(
\sum_{k\in\mathcal K}q_{\hat r\hat s}^kH_{rk}
-p_1(\hat r,\hat s)
\right) \quad \text{and} \\
\sum_{k\in\mathcal K}q_{\hat r\hat s}^k
u(t_s,\lambda_k^-)
-\tilde p_2(\hat r,\hat s) &= 
a_s+\varepsilon
\left(
-\sum_{k\in\mathcal K}q_{\hat r\hat s}^kH_{sk}
-p_2(\hat r,\hat s)
\right).
\end{aligned}
\tag{14}\label{eq:balanced-report-utility-embedding}
\]
The terms \(a_r\) and \(a_s\) depend only on the true types, and \(\varepsilon>0\). Thus, every pointwise reporting inequality in \(\mathcal E_H\) is equivalent to the corresponding inequality in \(\mathcal E_\lambda\). The bijection preserves DIC. Taking expectations over the opponent's type in \eqref{eq:balanced-report-utility-embedding} shows that it also preserves BIC.

If \(U_i\) and \(\tilde U_i\) denote truthful interim utilities in \(\mathcal E_H\) and \(\mathcal E_\lambda\), respectively, then
\[
\tilde U_1(t_r)=a_r+\varepsilon U_1(r)
\qquad \text{and} \qquad
\tilde U_2(t_s)=a_s+\varepsilon U_2(s).
\tag{15}\label{eq:balanced-interim-utility-embedding}
\]
When the agents' true types are \(t_r\) and \(t_s\) and both report truthfully, \eqref{eq:balanced-gross-utility-embedding} also yields
\[\sum_{k\in\mathcal K}q_{rs}^k(u(t_r,\lambda_k^+)+u(t_s,\lambda_k^-)) = 
a_r+a_s
+\varepsilon
\sum_{k\in\mathcal K}q_{rs}^k
(H_{rk}-H_{sk}).
\]
Writing \(\tilde W(q)\) for expected gross social surplus in \(\mathcal E_\lambda\), averaging over the independent uniform types yields
\[
\tilde W(q)
=
\frac{2}{3}\sum_{r\in\mathcal T}a_r
+\varepsilon W(q).
\tag{16}\label{eq:balanced-surplus-embedding}
\]
\eqref{eq:balanced-interim-utility-embedding} and \eqref{eq:balanced-surplus-embedding} imply that the bijection preserves payoff equivalence.
\end{proof}

\begin{corollary}\label{cor:local-utility-triangle-obstruction}
Suppose that three types and three prizes generate a utility triangle with nonzero determinant. There is a two-agent environment with those three types as the common type set and three lottery-valued alternatives. Under independent uniform priors, this environment has a deterministic BIC mechanism with no payoff-equivalent DIC mechanism, even when DIC allocation rules are allowed to randomize. The lotteries can be chosen so that \(\lambda_k^++\lambda_k^-=2\lambda^\circ\) as finite measures on \(X\) for every \(k\in\mathcal K\).
\end{corollary}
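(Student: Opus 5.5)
The plan is to combine \Cref{lem:balanced-obstruction} with \Cref{lem:balanced-embedding}. The only new work is matching the hypothesis, a nonzero determinant for an arbitrary utility triangle, to the setup of the embedding lemma. That setup requires \(\det D^\circ\neq0\) for the matrix built from \(\Delta_u^{t_0,x_0}\) at rows \(t_1,t_2\) and columns \(x_1,x_2\).

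The first step is to label the three types and three prizes of the given triangle as \(t_0,t_1,t_2\) and \(x_0,x_1,x_2\). I use the triangle's own baseline type and prize as \(t_0\) and \(x_0\). Then \(\det D^\circ=\Omega_u(t_0,t_1,t_2;x_0,x_1,x_2)\neq0\) by definition, so \(L=(D^\circ)^{-1}C\) is well defined.

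I also need the six points to be distinct, so that \(\{t_0,t_1,t_2\}\) is a genuine three-element type set and \(\lambda^\circ\) can put positive mass on three prizes. This follows from the nonzero determinant. If \(t_1=t_0\), the first row of \(D^\circ\) vanishes. If \(t_1=t_2\), the two rows coincide. The same reasoning applies to the columns and the prizes. So nondegeneracy forces distinctness.

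With this in hand, I choose \(\lambda^\circ\) with full support on \(\{x_0,x_1,x_2\}\). For example, take \(\lambda^\circ=\tfrac13(\delta_{x_0}+\delta_{x_1}+\delta_{x_2})\). I then pick \(\varepsilon>0\) small enough that every \(\lambda^\circ\pm\varepsilon z_k\) is nonnegative. This is possible because each \(z_k\) is a finite signed measure supported on \(\{x_0,x_1,x_2\}\) with total mass zero. This builds the environment \(\mathcal E_\lambda\) with uniform priors on \(\{t_0,t_1,t_2\}\) and three alternatives.

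Next I invoke \Cref{lem:balanced-obstruction}, which gives a deterministic BIC mechanism \((q^\star,p^\star)\) in \(\mathcal E_H\). Its image under the bijection of \Cref{lem:balanced-embedding} keeps the same deterministic allocation rule and is BIC in \(\mathcal E_\lambda\).

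Now suppose some DIC mechanism in \(\mathcal E_\lambda\) were payoff-equivalent to this image. That DIC mechanism may have a randomized allocation. Its preimage under the bijection is then DIC in \(\mathcal E_H\) and payoff-equivalent to \((q^\star,p^\star)\). This uses that the bijection is defined for every allocation rule, randomized or not, and preserves DIC and payoff equivalence in both directions. But then it has the interim utilities \(U_1^\star,U_2^\star\) and surplus \(\tfrac23\). This contradicts the bound \(W\le\tfrac59\) in \Cref{lem:balanced-obstruction}.

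The balancing identity \(\lambda_k^++\lambda_k^-=2\lambda^\circ\) is stated directly in \Cref{lem:balanced-embedding}.

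The main obstacle is the bookkeeping in the first step. I must make sure that an arbitrary nonzero-determinant triangle, whose baseline may be any of its types and prizes, fits the exact form required by the embedding lemma, including distinctness. The argument above settles this by taking the triangle's own baselines. Everything else is a direct composition of the two lemmas.
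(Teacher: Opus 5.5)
Your proposal is correct and matches the paper's proof. The paper likewise pushes the deterministic BIC mechanism of \Cref{lem:balanced-obstruction} through the bijection of \Cref{lem:balanced-embedding}, then pulls any putative payoff-equivalent (possibly randomized) DIC mechanism back to \(\mathcal E_H\) to contradict the bound \(W\le 5/9\). Your extra bookkeeping is valid: \(\det D^\circ=\Omega_u\neq0\) forces the three types and three prizes to be distinct, and \(\lambda^\circ\) and \(\varepsilon\) can be chosen explicitly; this only spells out what the paper leaves to the setup preceding \Cref{lem:balanced-embedding}.
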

\begin{proof}[Proof of \Cref{cor:local-utility-triangle-obstruction}]
Apply \Cref{lem:balanced-embedding} to the deterministic BIC mechanism in \Cref{lem:balanced-obstruction}. The corresponding mechanism in \(\mathcal E_\lambda\) is BIC and retains the same deterministic allocation rule. If this mechanism admitted a payoff-equivalent DIC mechanism, \Cref{lem:balanced-embedding} would imply that the deterministic BIC mechanism in \(\mathcal E_H\) also admitted a payoff-equivalent DIC mechanism, contradicting \Cref{lem:balanced-obstruction}.
\end{proof}

\subsection{Restoring the full type space}

By \Cref{cor:local-utility-triangle-obstruction}, there is a two-agent environment with the three selected types as the common type set and three lottery-valued alternatives. Under independent uniform priors, this environment admits a deterministic BIC mechanism with no payoff-equivalent DIC mechanism. This is almost, but not quite, enough for \Cref{thm:maximal-domain}: I define universal robustifiability using the original type set \(T\) and independent full-support priors. In the next and final lemma, I close precisely this gap. There, I show that any obstruction on a core set of types can be extended to the full type space while assigning positive probability to every type.



Consider a two-agent environment with set of types \(T\), fixed finite alternatives, and fixed lotteries \(\lambda_i(a)\) for each agent and alternative. For each agent \(i\in\{1,2\}\), let \(T_i^\circ\subseteq T\) be a nonempty core set of types, and let \(T_{-i}^\circ\) denote the other agent's core set of types. The core environment is the restriction of the environment to \(T_1^\circ\times T_2^\circ\), under independent uniform priors over the two core sets. I call the reports \(r_i\in T_i^\circ\) agent \(i\)'s \textit{core reports}.

\begin{lemma}\label{lem:full-support-extension}
Suppose that the core environment admits a BIC mechanism \(M^\circ\) with no payoff-equivalent DIC mechanism. Then, under some independent full-support prior on \(T^2\), the full environment admits a BIC mechanism with no payoff-equivalent DIC mechanism.
\end{lemma}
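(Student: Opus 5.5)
The plan is to build the full-support prior and the extended mechanism so that the core environment sits inside the full one exactly, not approximately. Each non-core type will be sent to some core report, and the prior will be chosen so that the distribution of these \emph{mapped} reports is exactly uniform on the core.

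\textbf{Step 1 (the extended mechanism).} Write \(M^\circ=(q^\circ,p^\circ)\). For each agent \(i\) and each \(t\in T\setminus T_i^\circ\), let \(\varphi_i(t)\in T_i^\circ\) be a core report that maximizes type \(t\)'s interim utility in \(M^\circ\) against a uniformly distributed, truthful opponent on \(T_{-i}^\circ\). Set \(\varphi_i(t)=t\) on \(T_i^\circ\).

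\textbf{Step 2 (the prior).} For small \(\eta>0\), give every non-core type mass \(\eta\) under \(\pi_i\). Give each core type \(r\) mass \(1/|T_i^\circ|\) minus \(\eta\) times the number of non-core types \(t\) with \(\varphi_i(t)=r\). This prior has full support, and the pushforward \(\pi_i\circ\varphi_i^{-1}\) is exactly uniform on \(T_i^\circ\).

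\textbf{Step 3 (BIC of the extension).} Define \(M(t_1,t_2)\coloneqq M^\circ(\varphi_1(t_1),\varphi_2(t_2))\). Because the pushforward is uniform, every agent faces exactly the core opponent distribution over outcomes. Hence a report \(\hat t\) yields the same interim lottery and expected payment as the core report \(\varphi_i(\hat t)\) does in the core environment.
\begin{itemize}[nosep,noitemsep]
\item A core type cannot profit from any report, by BIC of \(M^\circ\).
\item A non-core type \(t\) obtains the best core-report payoff by the choice of \(\varphi_i(t)\). Every other report mimics some core report, so it cannot do better.
\end{itemize}
So \(M\) is BIC. Its interim utilities and surplus are explicit functions of those of \(M^\circ\) and of \(u\) evaluated at the non-core types.

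\textbf{Step 4 (contradiction).} Suppose a DIC mechanism \(M'\) is payoff-equivalent to \(M\) in the full environment. I would produce a DIC mechanism in the core environment payoff-equivalent to \(M^\circ\), contradicting the hypothesis. The natural candidate is \(M''(r,s)\coloneqq\) the outcome of \(M'\) when agent \(i\) is labelled by his own core report and the opponent's label is drawn from the conditional prior on the fibre \(\varphi_{-i}^{-1}(r_{-i})\).

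The difficulty is that one randomization must serve both agents. Each agent's own label should equal his report, while the opponent's label should be drawn from a fibre. With two agents these requirements conflict, so \(M''\) is not obviously a single well-defined mechanism.

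\textbf{Fallback: a limit argument.} If that construction fails, I would restrict \(M'\) to core profiles, where DIC holds pointwise. Its interim utilities and surplus then differ from those of \(M^\circ\) only through the \(O(\eta)\) weight on non-core opponents. For fixed interim targets and fixed surplus, the set of DIC mechanisms in the core environment is described by a finite linear program. Its feasible set of (interim utilities, surplus) is a closed polyhedron. Letting \(\eta\to0\) and using closedness would then give a DIC mechanism exactly payoff-equivalent to \(M^\circ\).

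Alternatively, I would use the slack from \Cref{lem:balanced-obstruction}, where DIC surplus is at most \(5/9<2/3\), to absorb an \(O(\eta)\) error directly.

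\textbf{Main obstacle.} The hard step is controlling the contribution of \(M'\) at profiles where the opponent is non-core. Payments there are not bounded a priori, so the \(O(\eta)\) weight need not produce an \(O(\eta)\) error. I would first normalize. Adding any function \(h_i(t_{-i})\) to agent \(i\)'s payments preserves DIC, so I can shift payments so that each agent's \emph{ex post} utility at the truthful baseline report is bounded. DIC then bounds all remaining \emph{ex post} utilities by the range of \(u\). The payoff-equivalence constraints change under this shift, and I would re-express them as adjusted targets, so that the \(O(\eta)\) estimate becomes legitimate.
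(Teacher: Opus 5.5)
Your fallback route is correct and is essentially the paper's proof. It uses the same best-response map $\varphi_i$ with an exactly uniform pushforward prior, the same normalization of payments by subtracting the opponent-dependent baseline payment $\hat p_i(r_i^0,t_{-i})$ so that DIC bounds all payments, and the same limit as $\eta\downarrow0$. The only difference is that you conclude via closedness of the polyhedral set of (interim utility, surplus) vectors attainable by core DIC mechanisms, whereas the paper extracts a convergent subsequence of normalized mechanisms and adds back constants $\kappa_i$. Note that your ``slack'' alternative would only handle the specific mechanism of \Cref{lem:balanced-obstruction}, not the lemma as stated, so the limit argument is the one to keep.
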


\begin{proof}[Proof of \Cref{lem:full-support-extension}]
Write \(M^\circ=(q^\circ,p^\circ)\). For \(i\in\{1,2\}\), \(t_i\in T\), and \(r_i\in T_i^\circ\), let
\[
V_i(t_i,r_i)
\coloneqq
\frac{1}{|T_{-i}^\circ|}
\sum_{r_{-i}\in T_{-i}^\circ}
\left(
\sum_{a\in\mathcal A}
q^{\circ,a}(r_i,r_{-i})u(t_i,\lambda_i(a))
-p_i^\circ(r_i,r_{-i})
\right).
\]
Choose a map \(\phi_i\colon T\to T_i^\circ\) such that \(\phi_i(t_i)\in\arg\max_{r_i\in T_i^\circ}V_i(t_i,r_i)\) for every \(t_i\in T\). For each \(r_i\in T_i^\circ\), choose \(\phi_i(r_i)=r_i\), which is possible because \(M^\circ\) is BIC in the core environment.

For a probability distribution \(\rho\) on \(T\), write
\((\phi_i)_\#\rho(r)
\coloneqq
\sum_{\substack{t\in T\\\phi_i(t)=r}}\rho(t)\) for the distribution on \(T_i^\circ\) induced by \(\phi_i\).

Let \(m_i\coloneqq|T_i^\circ|\). If \(T\setminus T_i^\circ=\emptyset\), let \(\pi_i^\eta\) be uniform on \(T_i^\circ=T\). If \(T\setminus T_i^\circ\neq\emptyset\), choose positive weights \(w_i(t)\), indexed by \(t\in T\setminus T_i^\circ\), such that
\(\sum_{t\in T\setminus T_i^\circ}w_i(t)=1\). For every sufficiently small \(\eta>0\), set \(\pi_i^\eta(t)\coloneqq\eta w_i(t)\) for \(t\in T\setminus T_i^\circ\) and
\[
\pi_i^\eta(r)
\coloneqq
\frac{1}{m_i}
-
\eta
\sum_{\substack{t\in T\setminus T_i^\circ\\\phi_i(t)=r}}
w_i(t),
\]
for \(r\in T_i^\circ\). For sufficiently small \(\eta\), \(\pi_i^\eta(r)>0\) for every \(r\in T_i^\circ\). In either case, \(\pi_i^\eta\) has full support and satisfies
\[
(\phi_i)_\#\pi_i^\eta
=
\operatorname{Unif}(T_i^\circ),
\tag{17}\label{eq:full-support-pushforward}
\]
and \(\pi_i^\eta(T\setminus T_i^\circ)
\rightarrow 0\) as \(\eta\downarrow0\). Write \(\pi^\eta\coloneqq\pi_1^\eta\otimes\pi_2^\eta\) for the associated product prior.

Define a mechanism \(M^\eta=(q^\eta,p^\eta)\) in the full environment by
\[q^{\eta,a}(t_1,t_2) \coloneqq
q^{\circ,a}(\phi_1(t_1),\phi_2(t_2)) \qquad \text{and} \qquad
p_i^\eta(t_1,t_2) \coloneqq
p_i^\circ(\phi_1(t_1),\phi_2(t_2)).
\]
By \eqref{eq:full-support-pushforward}, the distribution of the opponent's induced core report is uniform. Truthfully reporting \(t_i\) induces the core report \(\phi_i(t_i)\), which maximizes \(V_i(t_i,\cdot)\). Every deviation in the full report space induces another core report. Hence, \(M^\eta\) is BIC under \(\pi^\eta\).

Suppose for the sake of contradiction that every sufficiently small \(\eta>0\) admits a payoff-equivalent DIC mechanism \((\hat q^\eta,\hat p^\eta)\). Fix baseline core reports \(r_i^0\in T_i^\circ\), and normalize the payments by setting
\(\overline p_i^\eta(t_i,t_{-i})
\coloneqq
\hat p_i^\eta(t_i,t_{-i})
-
\hat p_i^\eta(r_i^0,t_{-i})\). Subtracting a term that depends only on the other agent's report preserves DIC.

Because the sets of types and alternatives are finite, choose \(B<+\infty\) such that \(|u(t_i,\lambda_i(a))|\leq B\) for every \(i\), \(t_i\), and \(a\). The DIC inequality for true type \(t_i\) against report \(r_i^0\) implies \(\overline p_i^\eta(t_i,t_{-i})\leq2B\), while the DIC inequality for true type \(r_i^0\) against report \(t_i\) implies \(\overline p_i^\eta(t_i,t_{-i})\geq-2B\). Thus, all normalized payments are uniformly bounded.

Choose a sequence \(\eta_n\downarrow0\). The allocation rules lie in the compact set \(\Delta(\mathcal A)^{T^2}\), and the normalized payment vectors are uniformly bounded in a finite-dimensional space. Hence, after passing to a subsequence, \(\hat q^{\eta_n}\to q^0\) and
\(\overline p^{\eta_n}\to\overline p^0\) for some
\((q^0,\overline p^0)\). Since DIC is defined by finitely many weak inequalities, \((q^0,\overline p^0)\) is DIC and so its restriction to the core report profiles is DIC.

Let \(\overline U_i^n(r_i)\) be the interim utility of core type \(r_i\) under \((\hat q^{\eta_n},\overline p^{\eta_n})\) and prior \(\pi_{-i}^{\eta_n}\). Because the types are independent, payment normalization shifts all of agent \(i\)'s interim utilities by the same constant. Since \(M^{\eta_n}\) restricts to \(M^\circ\) on core reports and \eqref{eq:full-support-pushforward} makes the induced opponent report uniform, payoff equivalence consequently yields
\[
\overline U_i^n(r_i)-\overline U_i^n(r_i^0)
=
U_i^{M^\circ}(r_i)-U_i^{M^\circ}(r_i^0), \quad \text{for every } r_i\in T_i^\circ.
\tag{18}\label{eq:full-support-utility-differences}
\]

By the definition of \(\pi_{-i}^{\eta_n}\), the opponent's prior converges to the uniform distribution on \(T_{-i}^\circ\). Let \(\overline U_i^0(r_i)\) denote interim utility under the restriction of \((q^0,\overline p^0)\) to the core type profiles and the uniform prior on \(T_{-i}^\circ\). Because the sums defining interim utility are finite, the convergence of \(\hat q^{\eta_n}\), \(\overline p^{\eta_n}\), and \(\pi_{-i}^{\eta_n}\) implies \(\overline U_i^n(r_i)\to\overline U_i^0(r_i)\) for every \(r_i\in T_i^\circ\). Taking limits in \eqref{eq:full-support-utility-differences} delivers \(\overline U_i^0(r_i)-\overline U_i^0(r_i^0)=U_i^{M^\circ}(r_i)-U_i^{M^\circ}(r_i^0)\). Set \(\kappa_i\coloneqq\overline U_i^0(r_i^0)-U_i^{M^\circ}(r_i^0)\), and set \(p_i^0(r_i,r_{-i})\coloneqq\overline p_i^0(r_i,r_{-i})+\kappa_i\). Adding \(\kappa_i\) to every payment preserves DIC and reduces every interim utility by \(\kappa_i\). For every \(r_i\in T_i^\circ\), \(\overline U_i^0(r_i)-\kappa_i=\overline U_i^0(r_i)-\overline U_i^0(r_i^0)+U_i^{M^\circ}(r_i^0)=U_i^{M^\circ}(r_i)\). Thus, the restriction of \((q^0,p^0)\) to the core type profiles gives every core type the same interim utility as \(M^\circ\).

It remains to compare social surplus. Since the relevant sums are finite, \(\hat q^{\eta_n}\to q^0\) and \(\pi^{\eta_n}\to\operatorname{Unif}(T_1^\circ)\otimes\operatorname{Unif}(T_2^\circ)\) imply \(W^{\pi^{\eta_n}}(\hat q^{\eta_n})
\rightarrow 
W^{\operatorname{Unif}(T_1^\circ)\otimes\operatorname{Unif}(T_2^\circ)}(q^0)\). Likewise, because \(q^{\eta_n}\) agrees with \(q^\circ\) on the core and the probability of every profile involving an extra type tends to zero,
\(W^{\pi^{\eta_n}}(q^{\eta_n})
\rightarrow 
W^{\operatorname{Unif}(T_1^\circ)\otimes\operatorname{Unif}(T_2^\circ)}(q^\circ)\). By payoff equivalence, \(W^{\pi^{\eta_n}}(\hat q^{\eta_n})=W^{\pi^{\eta_n}}(q^{\eta_n})\) for every \(n\) and so the two limits are equal. Consequently, after adding the constants \(\kappa_i\) to the payments, the restriction of \((q^0,p^0)\) to the core is DIC and payoff-equivalent to
\(M^\circ\), contradicting the hypothesis.
\end{proof}

\section{Proof of \texorpdfstring{\Cref{thm:maximal-domain}}{theorem}}

\begin{proof}[Proof of \Cref{thm:maximal-domain}]
The equivalence among the rank condition, the factorization in \eqref{eq:rank-one-utility-factorization}, and the vanishing of every utility triangle follows from \Cref{lem:incentive-rank-characterization}.

If \(r(u)\leq1\), \Cref{prop:rank-one-robustification} implies that the domain is universally robustifiable. Suppose instead that \(r(u)\geq2\). By \Cref{lem:incentive-rank-characterization}, there are types \(t_0,t_1,t_2\in T\) and prizes \(x_0,x_1,x_2\in X\) whose utility triangle has nonzero determinant. By \Cref{cor:local-utility-triangle-obstruction}, the restriction to those three types admits a two-agent, three-alternative BIC mechanism with no payoff-equivalent DIC mechanism. Its allocation rule is deterministic, and, for every alternative \(k\), the two agents' lotteries satisfy \(\lambda_k^++\lambda_k^-=2\lambda^\circ\).

Let the full environment have type set \(T\) and the same three alternatives, with alternative \(k\) giving the two agents \(\lambda_k^+\) and \(\lambda_k^-\). For each agent \(i\in\{1,2\}\), let \(T_i^\circ\coloneqq\{t_0,t_1,t_2\}\). Applying \Cref{lem:full-support-extension} produces an independent full-support prior over the original set of types \(T\) and a BIC extension with no payoff-equivalent DIC mechanism, which uses the same three alternatives and retains the deterministic allocation rule and the balancing identities. Thus, the domain is not universally robustifiable.\end{proof}

\section{Relation to the literature}\label{sec:related-literature}

The modern BIC-DIC equivalence result begins with \citet{manelli2010bayesian} and is developed for general social-choice environments by \citet{gershkov2013equivalence}.\footnote{The robustness appeal of dominant strategies is emphasized by \citet{bergemann2005robust}. Earlier, \citet{mookherjee1992dominant} asks when a BIC allocation rule itself can be implemented in dominant strategies by changing transfers. Subsequent work studies correlated types, nonlinear utilities, or pursues geometric approaches; see \citet{kushnir2015sufficiency}, \citet{kushnir2019equivalence}, \citet{goeree2023geometric}, \citet{kushnir2020linear}, and \citet{kleiner2021extreme}. \citet{chen2019equivalence} equates stochastic and deterministic mechanisms.} I prove the inverse theorem. I begin with an arbitrary finite utility domain, impose no order or parameterization on its types, and ask when equivalence survives every finite social-choice environment and every independent full-support prior. \Cref{thm:maximal-domain} delivers an exact answer: universal robustifiability holds if and only if incentive rank is at most one. In this precise finite-domain sense, my result completes the positive equivalence program. The scalar-linear structure behind \citet{gershkov2013equivalence}'s theorem is not merely sufficient or convenient; universal equivalence forces it.

The necessity direction of my result is in some sense canonical: any second independent pattern in how types value prize changes contains a three-type, three-prize obstruction. I turn that local failure of rank one into a two-agent environment with three lottery-valued alternatives and an independent full-support prior in which a deterministic BIC mechanism cannot be matched by any payoff-equivalent DIC mechanism, even when DIC allocation rules are allowed to randomize. Moreover, every alternative can preserve the same aggregate lottery across the two agents.\footnote{\citet{gershkov2013equivalence} exhibit failures outside their benchmark assumptions. \citet{manelli2019dominant} study multi-object trading, where fixed supplies create additional feasibility restrictions; and \citet{yang2025multidimensional} establish a sharp anti-equivalence result when the DIC replacement must be deterministic.}

The closest antecedent to my paper is \citet{kushnir2019equivalence}. They begin with ordered one-dimensional types and regular nonlinear valuations, characterize increasing differences over distributions by the one-factor representation \(v_i(a,x_i)=f_i(a)M_i(x_i)+m_i(x_i)+g_i(a)\), and obtain exact BIC-DIC equivalence when \(f=(f_i)_i\) is convex-valued and \(g=(g_i)_i\) is a linear transformation of \(f\).\footnote{They do not establish that the (universal) one-factor structure is necessary for BIC-DIC equivalence. My theorem answers a more basic inverse question. Once the utility domain is fixed but the social-choice environment may vary, what does universal equivalence itself reveal about preferences? A scalar incentive index.} Also relevant is \citet{kartik2024singlecrossing}, whose monotonic-differences specialization has a one-factor form. \citet{kartik2025convex} tie convex choice to directional single-crossing differences and show that, under regularity, directional single-crossing differences over lotteries leave only an effectively one-dimensional or affine representation.\footnote{Related results on aggregation include \citet{quah2012aggregating} and \citet{abbas2011oneswitch}. \citet{carroll2012local} studies when local incentive constraints suffice for global incentive compatibility.} Those papers answer questions about the geometry of choice and comparative statics. The same one-factor form acquires an exact maximal implementation meaning here: it delineates precisely the boundary of universal BIC-DIC equivalence. 

\bibliography{sample}

\end{document}